\newtheorem{definition}{Definition}
\newtheorem{theorem}{Theorem}
\newtheorem{lemma}{Lemma}
\newtheorem{remark}{Remark}
\newtheorem{assumption}{Assumption}
\newtheorem{stassumption}[assumption]{Standing Assumption}
\newtheorem{proposition}{Proposition}
\newtheorem{example}{Example}
\newcommand{\argmin}{\mathop{\mathrm{argmin}}\limits}
\newcommand{\Id}{ \textrm{Id}}
\newcommand{\bR} { {\mathbb R}}
\newcommand{\bN} { {\mathbb N}}
\newcommand{\fix} { {\mathrm{fix}}}
\newcommand{\zer} { {\mathrm{zer}}}
\newcommand{\diag} { {\mathrm{diag}}}
\newcommand{\ca}[1]{\mathcal{#1}}
\newcommand{\bld}[1]{\boldsymbol{#1}}
\newcommand{\1}{\bld 1}
\newcommand{\0}{\bld 0}
\newcommand{\col }{\mathrm{col}}
\newcommand{\iffs}{\Leftrightarrow}
\newcommand{\QEDopen}{$\square$}
\newcommand{\QED}{$\blacksquare$}
\journal{European Journal of Control}
\begin{document}

\begin{frontmatter}

%% Title, authors and addresses

%% use the tnoteref command within \title for footnotes;
%% use the tnotetext command for theassociated footnote;
%% use the fnref command within \author or \address for footnotes;
%% use the fntext command for theassociated footnote;
%% use the corref command within \author for corresponding author footnotes;
%% use the cortext command for theassociated footnote;
%% use the ead command for the email address,
%% and the form \ead[url] for the home page:
%% \title{Title\tnoteref{label1}}
%% \tnotetext[label1]{}
%% \author{Name\corref{cor1}\fnref{label2}}
%% \ead{email address}
%% \ead[url]{home page}
%% \fntext[label2]{}
%% \cortext[cor1]{}
%% \address{Address\fnref{label3}}
%% \fntext[label3]{}

\title{\LARGE \bf An asynchronous distributed and scalable generalized Nash equilibrium seeking algorithm for strongly monotone games}

%% use optional labels to link authors explicitly to addresses:
%% \author[label1,label2]{}
%% \address[label1]{}
%% \address[label2]{}
\author[CC]{Carlo Cenedese}
\author[GB]{Giuseppe Belgioioso}
\author[SG]{Sergio Grammatico}
\author[CC]{Ming Cao}
\address[CC]{Engineering and Technology Institute Groningen (ENTEG), University of Groningen, The Netherlands}
\address[GB]{Control System group, TU Eindhoven, Eindhoven, The Netherlands}
\address[SG]{Delft Center for Systems and Control, TU Delft, The Netherlands}

\begin{abstract}

In this paper, we present three distributed algorithms to solve a class of generalized Nash equilibrium (GNE) seeking problems in strongly monotone games. The first one (SD-GENO) is based on synchronous updates of the agents, while the second and the third (AD-GEED and AD-GENO) represent asynchronous solutions that are robust to communication delays. AD-GENO can be seen as a refinement of AD-GEED, since it only requires node auxiliary variables, enhancing the scalability of the algorithm. Our main contribution is to  prove converge to a variational GNE of the game via  an operator-theoretic approach. 
Finally, we apply the algorithms to network Cournot games and  show how different activation sequences and delays affect convergence. We also compare the proposed algorithms to the only other in the literature (ADAGNES), and observe that AD-GENO outperforms the alternative.
\end{abstract}

%%Graphical abstract

%\begin{graphicalabstract}
%\includegraphics[scale = 0.5]{}
%\end{graphicalabstract}

%%Research highlights
%\begin{highlights}
%\item Novel asynchronous and fully decentralized algorithm for Generalized Nash Equilibrium seeking with only node variables for monotone games on network.
%\item Resilience to delayed information.
%\item Fast convergence w.r.t. other algorithms.
%\item Low memory requirement and improved scalability.
%\end{highlights}

\begin{keyword}
%% keywords here, in the form: keyword \sep keyword

%% PACS codes here, in the form: \PACS code \sep code

%% MSC codes here, in the form: \MSC code \sep code
%% or \MSC[2008] code \sep code (2000 is the default)
Game theory  \sep variational GNE \sep monotone games  \sep asynchronous update \sep delayed communication \sep operator theory
\end{keyword}

\end{frontmatter}

%% \linenumbers
                                                                                                                                                                                                                                                                                                                                                                                                                                                                                                                                                                                                                                                                                                                                                                                                                                                                                                                                                                                                                                                                                                                                                                                                                                                                                                                                                                                                                                                                                                                                                                                                                                                                                                                 
\section{Introduction}
\subsection{Motivation and literature overview}
%% Intro
In modern society, multi-agent network systems arise in several areas, leading to increasing research activities. When self-interested agents interact between each other, one of the best mathematical tools to study the emerging collective behavior is noncooperative game theory over networks. In fact, networked games emerges in several application domains, such as smart grids  \cite{dorfer:simpson-porco:bullo:16,parise:colombino:grammatico:lygeros:14}, social networks \cite{grammatico:18tcns} and distributed robotics \cite{martinez:bullo:cortes:frazzoli:07,cenedese:CDC2018:towards_TV_prox_dyn}.
In a game setup, the players (or agents) aim at minimizing a local and private cost function which represents their individual interest, and, at the same time, satisfy local and global constraints, limiting the possible decisions, or  strategies. 
Usually, there is a dependency of the cost and the constraints of a player from the strategies of a subset of other players, generically called ``neighbors''. Thus, each decision is influenced by some local information, which is typically exchanged with the neighbors. One popular notion of solution for these games is a collective equilibrium where no player benefits from unilaterally changing its strategy, see~\cite{facchinei:2007:GNE_problems}.
 
%% Synch vs Asynch
In \cite{grammatico:18tcns,yi:pavel:2019:operator_splitting_approach,belgioioso:grammatico:17cdc}, the authors focused on developing synchronous and distributed equilibrium seeking algorithms for noncooperative games, namely, the case in which all the agents update their strategies at the same time. Even though this assumption is quite common, it can lead to sever limitations in the case of heterogeneous agents in the game. For example, imagine to have two types of agents, divided in terms of good and bad performances, in a synchronous update scheme; the former must wait the latter before a new update can be carried on. In fact, this would produce  a bottleneck in the overall performance. To overcome this problem, we focus on developing asynchronous update rules. Moreover, it is known that asynchronicity can also speed up the convergence, facilitate the insertion of new agents in the network and even increase robustness w.r.t. communication faults, see~\cite{BERTSEKAS:1991:Survey_Asynch} and references therein.  

%% Lit. Overview
Among the very first works on asynchronous distributed optimization, the one of Bertsekas and Tsitsiklis in \cite{bertsekas:1989:parallel_optimization} stands out. From there onward, several authors elaborated on these ideas and produced novel results for convex optimization~\cite{recht:2011:hogwild,Combettes:2015:stoch_quadi_fejer,liu:2015:asynchronous_parallel_stoch_coord_desc,nedic:2011:asynchronous_broadcast-based_convex_opt}. 
In~\cite{yi:pavel:2019:asynch_distributed_GNE_w_partial_info}, Yi and Pavel developed an asynchronous algorithm to solve noncooperative generalized games subject to equality constraints. This result was enabled by the framework (ARock), recently introduced by Peng et al. in~\cite{peng2016arock}, that provides a wide range of asynchronous variations of the classical fixed point iterative algorithms. 

%\vspace{2cm} 
%The pioneering work of  Bertsekas and Tsitsiklis \cite{bertsekas:1989:parallel_optimization} can be considered the starting point of the literature on parallel asynchronous optimization. 
%During the past years, several asynchronous algorithms for distributed convex optimization were proposed \cite{recht:2011:hogwild,Combettes:2015:stoch_quadi_fejer,liu:2015:asynchronous_parallel_stoch_coord_desc,nedic:2011:asynchronous_broadcast-based_convex_opt}. Recently the authors  in \cite{peng2016arock} provided a simple framework (ARock) to develop a wide range of iterative fixed point algorithms based on nonexpansive operators, e.g.,  adopted in \cite{yi:pavel:2019:asynch_distributed_GNE_w_partial_info} for  variational GNE seeking under equality constraints.
 % ,Wu:Yuan:Sayed:2016:Decentralized_consensus_optimization
%It uses a stochastic update of the variables that provides robustness against delayed information.
%This framework was used in \cite{QinWu:2017:Accelerated_image_proc} to develop an ADMM algorithm to speed up the image processing for the electroencephalography inverse problem. More recently, it was also adopted in \cite{Pavel:Yi:2018:Asynch} to seek variational GNE seeking under equality constraints and using edge variables.
%paper contribution
%\change{ Paper contribution:\begin{itemize}
%\item use of node variables for scalability 
%\item consider the case of inequality constraints
%\item same delays in the case between variables
%\end{itemize}   }
In this paper, we propose an asynchronous algorithm robust to delayed information  to solve noncooperative games subject to affine coupling constraints. Furthermore, to achieve a fully decentralized update rule, we rely only on node auxiliary variables, preserving the scalability in the case of a large number of agents. This result is a significant contribution, due to the technical challenges in the asynchronous implementation of the algorithm, addressed by carefully analyzing the influence of the delayed information on the dynamics of the auxiliary variables. Finally, we compare the proposed solution to the one in~\cite{yi:pavel:2019:asynch_distributed_GNE_w_partial_info}, for the case of a Cournot game, showing that our algorithm achieves faster convergence. A preliminary  and partial version of these results were presented in~\cite{cenedese:2019:ECC}.

%In this paper, we propose an extension of the work in \cite{Pavel:Yi:2018:Asynch}. Specifically, we consider inequality coupling constraints and use a restricted set of auxiliary variables, namely, associated with the nodes rather than with the edges as in \cite{yi:pavel:2019:asynch_distributed_GNE_w_partial_info}. This latter upgrade  is non-trivial and presents technical challenges in the asynchronous implementation of the algorithm, which we overcome by analyzing the influence of the delayed information on the update of the auxiliary variables. The use of node variables only, rather than edge variables, preserves the scalability of the algorithm, with respect to the number of nodes. 
\subsection{Organization of the paper}
In Section~\ref{sec:problem_formulation}, we formalize the problem setup and define the concept of \textit{variational} GNE. In  Section~\ref{sec:synch_case}, we derive the iterative algorithm for GNE seeking for the synchronous case, i.e., SD-GENO. Its asynchronous counterpart (AD-GEED), that  adopt edge auxiliary variables, is then presented in Section~\ref{sec:asynch_case}. The main result of the paper is presented in Section~\ref{sec:AD_GENO}, where we introduce AD-GENO. Section~\ref{sec:simulations} is dedicated to the simulation results and  to the comparison between the different algorithms performance. Section~\ref{sec:conclusion} ends the paper presenting the conclusions and the outlooks of this work.  
  
\section{Notation}
\label{sec:notations}
\subsection{Basic notation}
The set of real, positive, and non-negative numbers are denoted by $\mathbb{R}$, $\mathbb{R}_{>0}$, $\mathbb{R}_{\geq 0}$, respectively; $\overline{\bR}\coloneqq \bR\cup\{\infty\}$. The set of natural numbers is $\mathbb{N}$. For a square matrix $A \in \bR^{n\times n}$, its transpose is $A^\top$, $[A]_{i}$ is the $i$-th row of the matrix and  $[A]_{ij}$ represents the element in $i$-th row and  $j$-th column. $A\succ 0 $ ($A\succeq 0 $)  stands for a positive definite (semidefinite) matrix,  instead $>$ ($\geq$) describes element wise inequality. $A\otimes B$ is the Kronecker product of the matrices $A$ and $B$.   The identity matrix is denoted by~$I_n\in\bR^{n\times n}$. $\0$ (resp. $\1$) is the vector/matrix with only $0$ (resp. $1$) elements. For $x_1,\dots,x_N\in\mathbb{R}^n$, the collective vector is denoted  by $\boldsymbol{x}:=\mathrm{col}((x_i)_{i\in(1,\dots,N)})\coloneqq[x_1^\top,\dots ,x_N^\top ]^\top$. 
$\diag((A_i)_{i\in(1,\dots,N)})$ describes a block-diagonal matrix with the matrices $A_1,\dots,A_N$ on the main diagonal. The null space of a matrix $A$ is denoted as $\mathrm{ker}(A)$.
%For vectors $x,y \in \bR^n$ and a positive definite matrix $ Q \in\bR^{n\times n}$, the weighted inner product and norm are denoted by $\langle x, Qy\rangle$ and $\lVert x \rVert_{Q}=\sqrt{\langle x, Qy\rangle}$, respectively; the induced matrix norm is denoted by $\lVert A\rVert_{Q}$. For $Q=I_{n}$, the standard inner product, Euclidean norm, and Frobenius norm are obtained. %A real $n$ dimensional Hilbert space obtained by endowing $\mathcal H=(\bR^n,\lVert\cdot\rVert)$  with the product $\langle x | y\rangle_{Q}$ is denoted by $\mathcal{H}_Q$.
The Cartesian product of the sets $\Omega_i$, $i=1,\dots ,N$ is $\prod^N_{i=1} \Omega_i$.

\subsection{Operator-theoretic notation}
The identity operator is denoted by~$\Id(\cdot)$. %The indicator function $\iota_\mathcal{C}:\bR^n\rightarrow[0,+\infty]$ of $\mathcal{C}\subseteq \bR^n$ is defined as $\iota_\mathcal{C}(x)=0$ if $x\in\mathcal{C}$; $+\infty$  otherwise.
The set valued mapping $N_{\ca C}:\bR^n\rightrightarrows \bR^n$ denotes the normal cone to the set $\mathcal{C}\subseteq \bR^n$, that is $N_{\ca C}(x)= \{ u\in\bR^n \,|\, \mathrm{sup}\langle \ca C-x,u \rangle\leq 0\}$ if $x \in \ca C$ and  $\varnothing$ otherwise. The graph of a set valued mapping $\ca A:\ca X\rightrightarrows \ca Y$ is $\mathrm{gra}(\ca A):= \{ (x,u)\in \ca X\times \ca Y\, |\, u\in\ca A (x)  \}$. For a function $\phi:\bR^n\rightarrow\overline{\mathbb{R}}$, define $\mathrm{dom}(\phi):=\{x\in\bR^n|\phi(x)<+\infty\}$ and its subdifferential set-valued mapping, $\partial \phi:\mathrm{dom}(\phi)\rightrightarrows\bR^n$, $\partial \phi(x):=\{ u\in \bR^n | \: \langle y-x|u\rangle+\phi(x)\leq \phi(y)\, , \: \forall y\in\mathrm{dom}(\phi)\}$.  The projection operator over a closed set $S\subseteq \bR^n$ is $\textrm{proj}_S(x):\bR^n\rightarrow S$ and it is defined as $\textrm{proj}_S(x):=\mathrm{argmin}_{y\in S}\lVert y - x \rVert^2$. A set valued mapping $\ca F:\bR^n\rightrightarrows \bR^n$ is $\ell$-Lipschitz continuous with $\ell>0$, if $\lVert u-v \rVert \leq \ell \lVert x-y \rVert$ for all $(x,u)\, ,\,(y,v)\in\mathrm{gra}(\ca F)$; $\ca F$ is (strictly) monotone if for all $(x,u),(y,v)\in\mathrm{gra}(\ca F)$ $\langle u-v,x-y\rangle \geq (>)0$ holds true, and  maximally monotone if it does not exist a monotone operator with a graph that strictly contains $\mathrm{gra}(\ca F)$. Moreover, it is $\alpha$-strongly monotone if for all $(x,u),(y,v)\in\mathrm{gra}(\ca F)$ it holds $\langle x-y, u-v\rangle \geq \alpha \lVert x-y \rVert^2$. The operator $\ca F$ is $\eta$-averaged ($\eta$-AVG) with $\eta\in(0,1)$ if $\lVert \ca F(x)-\ca F(y) \rVert^2 \leq \lVert x-y\rVert^2-\frac{1-\eta}{\eta}\lVert (\Id-\ca F)(x)-(\Id-\ca F)(y)  \rVert^2$ for all $x,y\in\bR^n$; $\ca F$ is $\beta$-cocoercive if $\beta\ca F$ is $\frac{1}{2}$-averaged, i.e. firmly nonexpansive (FNE).
The resolvent of an operator $\ca A:\bR^n\rightrightarrows \bR^n$ is $\mathrm{J}_{\ca A} :=(\Id+\ca A)^{-1}$.

\section{Problem Formulation}
\label{sec:problem_formulation}
\subsection{Mathematical formulation}
\label{subsec:game_formulation}

%As a first step, we introduce formally the class of games on which our analysis focuses. 
We consider a noncooperative game $\Gamma$ between $N$ agents (or players) subject to affine coupling constraints. We define the game as the triplet $\Gamma\coloneqq (\bld{\ca X}, \{f_i\}_{i\in\{1\dots N\}}, \ca G )$, where its elements are respectively: the collective feasible decision set, the players' local cost functions and the graph describing the communication network. In the following subsections, each one of them is introduced.

\subsubsection{Feasible strategy set}     
\label{subsubsec:feasibel set}  
Every agent $i\in\ca N:=\{1,\dots,N\}$ has a local decision variable (or strategy) $x_i$ belonging to its private decision set $\Omega_i\subset \bR^{n_i}$, namely the set of all those strategies that satisfy the local constraints of player $i$. The collective vector of all the strategies, or strategy profile of the game, is denoted as $\bld x\coloneqq\mathrm{col}( x_1,\dots,x_N)\in\bR^{n}$, where $n\coloneqq\sum_{i\in\ca N} n_i$. Then,  all the decision variables of all the players other than $i$ are represented via the compact notation $\bld x_{-i}\coloneqq\mathrm{col}(x_1,\dots,x_{i-1},x_{i+1},\dots,x_N)$.
We assume that the agents are subject to $m$ affine coupling constraints described by the affine function $\bld x \mapsto A\bld x +b$, where $A\in\bR^{m\times n}$ and $b\in\bR^m$.
Thus, the collective feasible decision set can be written as
 \begin{equation}
\label{eq:collective_feasible_dec_set}
\bld{\ca X} := \bld \Omega \cap \left\{\bld x\in\bR^{n} \,|\, A\bld x\leq b \right\}\, , 
\end{equation}      
where $\bld \Omega=\prod_{i\in\ca N} \Omega_i \subset \bR^{n}$, is the Cartesian product of the local constraints sets $\Omega_i$'s. Accordingly, the set of all the feasible strategies of each agent $i\in\ca N$ reads as
 \begin{equation*}
\label{eq:feasible_dec_set}
\ca{X}_i(\bld{x}_{-i}) := \left\{ y\in\Omega_i \;|\; A_i y -b_i  \leq   \textstyle{\sum_{j\in \ca N\setminus\{ i\} }} \left( b_j -A_jx_j \right) \right\}\, , 
\end{equation*} 
where $A = [A_1,\dots,A_N]$, $A_i\in\bR^{m\times n_i}$ and $\sum_{j=1}^N b_j =b$. 
The choice of affine coupling constraints is widely spread in the literature of noncooperative games, see e.g.,  \cite{yi:pavel:2019:operator_splitting_approach,Paccagnan_Gentile2016:Distributed_computation_GNE,cenedese_et_al:2019:TAC:proximal_point}. Moreover, in \cite[Remark~3]{grammatico:18tcns}, it is highlighted that separable and convex coupling constraints can always be rewritten in an  affine form.
%Notice that both the local decision set $\Omega_i$ and the components of the coupling constraints associated to player $i$, i.e. $A_i$ and $b_i$.
Finally, we introduce some standard assumptions \cite{yi:pavel:2019:operator_splitting_approach,cenedese_et_al:2019:TAC:proximal_point} on the sets just introduced.
% \smallskip
\begin{stassumption}[Convex constraint sets]
\label{ass:convex_constr_set}
For each player $i\in\ca N$, the set $\Omega_i$ is convex, nonempty and compact. The feasible local set $\ca X_i(\bld x_{-i})$ satisfies Slater's constraint qualification. 
\hfill \QEDopen
\end{stassumption}
 
 \subsubsection{Cost functions}
\label{subsubsec:cost_fn}
Each player $i\in\ca N$ has a local cost function  $f_i(x_i,\bld x_{-i}):\Omega_i\times \bld \Omega_{-i}\rightarrow \bR$, where  $\bld \Omega_{-i}\coloneqq \prod_{j\in\ca N\setminus\{i\}} \Omega_j$. The coupling between the players appears not only in the constraints but also in the cost function, due to the dependency on both $x_i$ and $\bld{x}_{-i}$. Next, we assume some properties for these functions that are extensively used in the literature \cite{facchinei:2007:GNE_problems,yi:pavel:2019:operator_splitting_approach}.   
%\smallskip
\begin{stassumption}[Convex and differentiable cost functions]
\label{ass:convex_diff_function}
For all $i\in\ca N$, the cost function $f_i$ is continuous, continuously differentiable and convex in its first argument. 
\hfill \QEDopen
\end{stassumption}

\subsubsection{Communication network}
\label{subsubsec:communication}
The communication between agents is described by an \textit{undirected and connected} graph $\cal G =(\ca N,\ca E)$ where $\ca E \subseteq \ca N \times\ca N$ is the set of edges.  Given two agents $i,j\in\ca N$, the couple $(i,j)$ belongs to $\ca E$, if agent $i$ shares information with agent $j$ and vice versa. Then we say that $j$ is a neighbour of $i$, i.e., $j\in\ca N_i$ where $\ca N_i$ is the neighbourhood of $i$.
The number of edges in the graph is denoted by $E\coloneqq |\ca E |$. To define the \textit{incidence matrix} $V\in\bR^{E\times N}$ associated to $\ca G$, let us label the edges as $e_l$, for $l\in\{1,\dots,E\}$. We define the entry $[V]_{li}\coloneqq 1$ (resp. $-1$) if $e_l=(i,\cdot)$ (resp. $e_l=(\cdot,i)$) and $0$ otherwise. The decision of which of the two agents composing an edge is the sink and which the source is arbitrary. By construction, $V\bld 1_N = \bld 0_N$. Then, we define   $\ca E_i^{\mathrm{out}}$ (resp. $\ca E_i^{\mathrm{in}}$) as the set of all the indexes $l$ of the edges $e_l$ that start from (resp. end in) node $i$, and hence $\ca E_i=\ca E_i^{\mathrm{out}} \cup \ca E_i^{\mathrm{in}}$. 
The \textit{node Laplacian }$L\in\bR^{N\times N}$ of an undirected graph is a symmetric matrix defined by $L\coloneqq V^\top V$,  \cite[Lem.~8.3.2]{Godsil:algebraic_graph_theory}. Another important property of $L$,  used in the remainder, is $ L\bld 1_N = \bld 0_N $.

\subsection{Generalized Nash Equilibrium}
\label{subsec:qeuilibrium_concept}

%The players in the game adopt a myopic best response dynamics, this is the last building block that leads us to the final formulation of the game, as
In summary, the considered generalized game is described by the following set of inter-dependent optimization problems:
\smallskip
\begin{equation}\label{eq:game_formulation}
\forall i\in\ca N \::\:
\begin{cases}
 \argmin_{y\in\bR^{n_i}} &f_i(y,\bld x_{-i})\\
\qquad \text{s.t.} \; &y\in \ca{X}_i(\bld{x}_{-i})\:.
\end{cases}
\end{equation} 
\smallskip

The most popular equilibrium concept considered for noncooperative games with coupling constraints is the \textit{generalized Nash equilibrium}, thus the configuration in which all the relations in~\eqref{eq:game_formulation} simultaneously hold.

% \smallskip
\begin{definition}[Generalized Nash Equilibrium]
\label{def:GNE}   
A collective strategy $\bld x^*\in\bld{\ca X}$ is a generalized Nash equilibrium (GNE) if, for each player $i$, it holds 
\begin{equation*}
f_i(x_i^*,\bld x_{-i}^*)\leq \mathrm{inf}\big\{\:f_i(y,\bld x_{-i}^*)\:|\: y\in \ca{X}_i(\bld{x}_{-i}^*)  \:\big\}\:.
\end{equation*}
%\begin{equation}
%\label{eq:GNE_best_resp}
%x_i^*\in \argmin_{y\in\bR^{n_i}} f_i(y,\bld x_{-i}^*)\quad \textup{s.t.} \quad y\in \ca{X}_i(\bld{x}_{-i}^*)\:.
%\end{equation}
 \vspace{-1cm}
 
 \hfill\QEDopen
\end{definition}
%\smallskip

In this work, we focus on a subset of GNE, the so called \textit{variational} GNE (v-GNE), that has attained growing interest in the recent years-- see \cite{ facchinei:2007:GNE_problems,belgioioso:grammatico:17cdc,kulkarni:shanbhag:12}.  The name of these equilibria derives from the fact that they can be formulated as the solutions to a \textit{variational inequality }(VI). An important property of these equilibria is that each agent faces the same penalty to fulfill the coupling constraints, which is particularly useful to represent a ``fair'' competition between agents~\cite{facchinei:2007:GNE_problems}. Variational GNE can be seen as a particular case of the concept of \textit{normalized equilibrium points}, firstly introduced by Rosen in~\cite{rosen:65} and further studied in~\cite{cenedese_et_al:2019:TAC:proximal_point,paccagnan:2019:nash_ward_eq_in_aggregative_games}.

To properly characterize this set, we define the \textit{pseudo-gradient} mapping (or game mapping) of \eqref{eq:game_formulation}, as 
\begin{equation}
\label{eq:pseudo_grad}
F(\bld x)=\col\left( (\nabla_{x_i}f_i(x_i,\bld x_{-i}))_{i\in\ca N}\right)\,. 
\end{equation}
The pseudo-gradient gathers in a collective vector form the gradients  of the cost functions each w.r.t. the local decision variable. 
Next, we introduce some standard technical assumptions, e.g.,~\cite{Facchinei:2011:KKT_and_GNE,BELGIOIOSO:2017:convexity_and_monotonicity_aggr_games}.
\smallskip
%\begin{stassumption}
%\label{ass:subgrad_strong_mon}
%The pseudo-gradient $F$ in \eqref{eq:pseudo_grad}  is  $\ell$-Lipschitz continuous and $\alpha$-strongly monotone, for some $\ell,\alpha>0$. \hfill\QEDopen
%\end{stassumption}
\begin{stassumption}
\label{ass:subgrad_strong_mon}
The mapping $F$ in \eqref{eq:pseudo_grad}  is $\alpha$-strongly monotone and $\ell$-Lipschitz continuous, for some $\alpha, \ell >0$. \hfill\QEDopen
\end{stassumption}
%\smallskip

When Standing assumption~\ref{ass:subgrad_strong_mon} holds true, the mapping $F$ is single valued and the set of v-GNE of the game in \eqref{eq:game_formulation} corresponds to the solution to VI($F,\bld{\ca X}$), namely the problem of finding a vector $\bld x^*\in\bld{\ca X}$ such that
\smallskip
\begin{equation}
\label{eq:VI_GNE}
 \langle F(\bld x^*),\bld x-\bld x^*\rangle \geq 0\,, \quad \forall \bld x \in \bld{\ca X} \:.
\end{equation}
The continuity of $F$ (Assumption~\ref{ass:convex_diff_function}) and compactness of $\bld{\ca X}$ (Assumption~\ref{ass:convex_constr_set}) imply the existence of a solution to VI($F,\bld{\ca X}$), while the strong monotonicity (Assumption~\ref{ass:subgrad_strong_mon}) entails uniqueness, see~\cite[Th.~2.3.3]{facchinei:pang}.      
    
Next, let us define the KKT conditions associated to the game in \eqref{eq:game_formulation}. The strong duality of the problem (Assumptions~\ref{ass:convex_constr_set}, \ref{ass:convex_diff_function}) implies that, if $\bld x^*$ is a GNE of \eqref{eq:game_formulation}, then there exist $N$ dual variables $\lambda^*_i\in\bR^m_{\geq 0}$, for all $i\in \ca N$, such that the following inclusions are satisfied:
\begin{equation} \label{eq:KKT_game}
\forall i\in\ca N :
\begin{cases}
\bld 0 \in \nabla_{x_i}f_i(x_i^*)+A_i^\top\lambda_i^*+N_{\Omega_i}(x_i^*)\, , \; \\
\bld 0 \in b-A\bld x^*+ N_{\bR^m_{\geq 0}}( \lambda^*_i)\:.
\end{cases} 
\end{equation}
Instead of looking for the solution of the general case where $\lambda_1^*,\dots,\lambda_N^*$ may be different, we examine the special case when $\lambda^*\coloneqq\lambda_1^* = \dots =\lambda_N^*$, namely
%In general, the dual variables $\lambda_1^*,\dots,\lambda_N^*$ may be different. We focus on the subclass of solutions in which the dual variables are equal, i.e., $\lambda^*=\lambda_1^* = \dots =\lambda_N^*$, hence  
\begin{equation} \label{eq:KKT_VI}
\forall i\in\ca N :
\begin{cases}
\bld 0 \in \nabla_{x_i}f_i(x_i^*)+A_i^\top\lambda^*+N_{\Omega_i}(x_i^*)\, , \; \\
\bld 0 \in b-A\bld x^* + N_{\bR^m_{\geq 0}}( \lambda^*) \:.
\end{cases} 
\end{equation}
%In this case, the KKT conditions for the VI($F,\bld X$) in \eqref{eq:VI_GNE} (see \cite{facchinei:fischer:piccialli:07,facchinei:pang}) read as   
%\begin{equation} \label{eq:KKT_VI}
%\begin{split}
%\bld 0 &\in \nabla_{x_i}f_i(x_i)+A_i^\top\lambda^*+N_{\Omega_i}(x_i^*)\, , \; \forall i\in\ca N \\
%\bld 0 &\in b-A\bld x^* + N_{\bR^m_{\geq 0}}( \lambda^*) \:.
%\end{split} \:.
%\end{equation}
It follows from~\cite[Th.~3.1(ii)]{facchinei:fischer:piccialli:07}, that the KKT inclusions in \eqref{eq:KKT_VI} correspond to the solution set to VI($F, \bld{\ca X}$). Thus, every solution $\bld x^*$ to VI($F,\bld{\ca X}$) is also a GNE of the game in \eqref{eq:game_formulation}, \cite[Th.~3.1(i)]{facchinei:fischer:piccialli:07}. Since the solution set to VI($F,\bld{\ca X}$) is a singleton, we conclude that there exists a unique v-GNE of the game~\eqref{eq:game_formulation}.
%The solution of these inclusions is the unique v-GNE of the game \eqref{eq:game_formulation}, completing the problem formulation, since we defined both the game and the equilibrium that we are interested in.  

\section{Synchronous Distributed GNE Seeking Algorithm}
\label{sec:synch_case}

% In order to facilitate the reader, 
We first introduce the synchronous counterpart of AD-GENO. The derivation of the \textit{\underline{S}ynchronous \underline{D}istributed \underline{G}N\underline{E} Seeking Algorithm with \underline{No}de variables} (SD-GENO) has as cornerstone an operator splitting approach to solve the KKT system in~\eqref{eq:KKT_VI}. %this technique has flourished in the literature on GNE finding problems, e.g., see~\cite{yi:pavel:2019:operator_splitting_approach,belgioioso:grammatico:18ecc}.
Originally proposed in \cite{yi:pavel:2019:operator_splitting_approach,belgioioso:grammatico:17cdc} in the contest of GNE finding problems.

\subsection{Algorithm design}
\label{sec:alg_develop_synch}

%As a first step, we can reformulate \eqref{eq:KKT_game} in a compact form as
%\begin{equation} \label{eq:compact_KKT_game}
%\begin{split}
%\bld 0 &\in F(\bld x)+\Lambda^\top \bld \lambda+N_{\bld \Omega}(\bld x) \\
%\bld 0 &\in \bar b-\Lambda\bld x + N_{\bR^{mN}_{\geq 0}}(\bld \lambda)
%\end{split} \:,
%\end{equation}
%where $\bld \lambda \coloneqq \col((\lambda_i)_{i\in\ca N})\in\bR^{mN}$, $\Lambda \coloneqq \diag((A_i)_{i\in\ca N})\in\bR^{mN\times n}$ and $\bar b \coloneqq \col ((b_i)_{i\in\ca N})\in\bR^{mN}$. 

The KKT conditions of each agent $i$ in~\eqref{eq:KKT_game} are satisfied by a couple $(x_i,\lambda_i)$, where the dual variables $\lambda_i$ may be different among the players. If we enforce the consensus among the dual variables, then the unique solution of the inclusions is the v-GNE of the game. This is achieved by exploiting the fact that $\ker(V) = \mathrm{span}(\1)$ and introducing  the auxiliary variables $\sigma_l,\, l\in\{1,\dots,E\}$, one for every edge in the graph.
Using the notations $\bld \lambda \coloneqq \col((\lambda_i)_{i\in\ca N})\in\bR^{mN}$, $\Lambda \coloneqq \diag((A_i)_{i\in\ca N})\in\bR^{mN\times n}$, $\bar b \coloneqq \col ((b_i)_{i\in\ca N})\in\bR^{mN}$, $\bld \sigma\coloneqq \col ((\sigma_l)_{l\in\{1\dots  E\}})\in\bR^{mE}$, $\bld V \coloneqq V\otimes I_m \in \bR^{mE\times mN}$ and $\bld L \coloneqq L\otimes I_m \in \bR^{mE\times mN}$, we cast the augmented version of the inclusions in \eqref{eq:KKT_game} by
\smallskip
\begin{equation} \label{eq:mod_compact_KKT_game}
\begin{split}
\bld 0 &\in F(\bld x)+\Lambda^\top \bld \lambda+N_{\bld \Omega}(\bld x) \\
\bld 0 &\in \bar b-\Lambda\bld x + N_{\bR^{mN}_{\geq 0}}(\bld \lambda)+ \bld{L\lambda} +\rho  \bld V^\top \bld \sigma \\
\bld{0} &= -\rho \bld{  V  \lambda} \:,
\end{split} 
\end{equation}
where $\rho\in\bR_{>0}$.

A solution $\varpi = \col (\bld x^*,\bld \sigma^*,\bld \lambda^*)$ of the above inclusions can be equivalently recast as a zero of the sum of two mappings $\ca A$ and $\ca B$ defined as 
\smallskip
\begin{equation}
\label{eq:operators_def}
\begin{split}
\ca A : \varpi \mapsto &\begin{bmatrix}
0 & 0 & \Lambda^\top  \\
0 & 0 & -\rho\bld V  \\
 -\Lambda & \rho\bld V^\top & 0
\end{bmatrix}\varpi +\begin{bmatrix}
N_{\bld \Omega} (\bld x)\\
0\\
N_{\bR^{mN}_{\geq 0}} (\bld \lambda)
\end{bmatrix}\\
\ca B : \varpi \mapsto &\begin{bmatrix}
F(\bld x)\\
0\\
\bar b + \bld{L\lambda}
\end{bmatrix}\,.
\end{split}
\end{equation} 
In fact, $\varpi^*\in\mathrm{zer}(\ca A+\ca B)$ if and only if $\varpi^*$ satisfies \eqref{eq:mod_compact_KKT_game}.

Next, we show that the zeros of $\ca A+\ca B$ characterize the v-GNE of the original game. 
%\smallskip
\begin{proposition}\label{prop:zer_AB_are_vGNE}
Let $\ca A$ and $\ca B$ be as in \eqref{eq:operators_def}. Then the following hold:
\begin{enumerate}[(i)]
\item $\mathrm{zer}(\ca A +\ca B)\not = \varnothing$ ,
\item if $ \col (\bld x^*,\bld \sigma^*,\bld{ \lambda}^*)\in\mathrm{zer}(\ca A +\ca B)$ then $(\bld x^*,\bld \lambda^*)$ satisfies the KKT conditions in \eqref{eq:KKT_game}, with $\lambda_1^*=\dots=\lambda_N^*$, hence $\bld x^*$ is the unique v-GNE of the game in \eqref{eq:game_formulation}.

 \hfill\QEDopen 
\end{enumerate} 
\end{proposition}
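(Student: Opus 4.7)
The plan is to establish a correspondence between zeros of $\ca A+\ca B$ and KKT pairs of the original game with equal dual multipliers. Part (ii) is the forward direction: a zero of $\ca A+\ca B$ yields the system \eqref{eq:KKT_VI}, which characterizes the unique v-GNE. Part (i) is the backward direction: the v-GNE, whose existence is already guaranteed by the VI-based discussion preceding the proposition, can be lifted to a full triple in $\zer(\ca A+\ca B)$ by constructing suitable $\bld\sigma^*$ and $\bld\lambda^*$.

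I start with (ii). From $\bld 0\in (\ca A+\ca B)(\varpi^*)$ I read off the three block inclusions
\begin{equation*}
\begin{split}
\bld 0 &\in F(\bld x^*)+\Lambda^\top \bld \lambda^*+N_{\bld \Omega}(\bld x^*),\\
\bld 0 &\in \bar b-\Lambda\bld x^* + N_{\bR^{mN}_{\geq 0}}(\bld \lambda^*) + \bld L\bld \lambda^* + \rho \bld V^\top\bld \sigma^*,\\
\bld 0 &= -\rho\bld V\bld \lambda^*.
\end{split}
\end{equation*}
Connectedness of $\ca G$ implies that the third equation forces $\lambda^*_1=\dots=\lambda^*_N\eqqcolon \lambda^*$, proving the consensus claim and making $\bld L\bld\lambda^*=\bld 0$ since $L\bld 1_N=\bld 0$. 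The first inclusion then decouples block-wise into $\bld 0\in \nabla_{x_i}f_i(x^*_i,\bld x^*_{-i})+A_i^\top\lambda^*+N_{\Omega_i}(x^*_i)$, matching the first row of \eqref{eq:KKT_game}. To recover the second row in the reduced KKT form \eqref{eq:KKT_VI}, I premultiply the second block inclusion by $\bld 1_N^\top\otimes I_m$: the $\bld V^\top$ term vanishes because $V\bld 1_N=\bld 0$, the $\bld L\bld\lambda^*$ term is already zero, while the normal-cone term satisfies $(\bld 1_N^\top\otimes I_m)N_{\bR^{mN}_{\geq 0}}(\bld\lambda^*)\subseteq N_{\bR^{m}_{\geq 0}}(\lambda^*)$, since each block lies in the common cone $N_{\bR^{m}_{\geq 0}}(\lambda^*)$ and cones are closed under finite summation. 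This yields $\bld 0\in b-A\bld x^* + N_{\bR^{m}_{\geq 0}}(\lambda^*)$, so \eqref{eq:KKT_VI} holds, and the text following \eqref{eq:KKT_VI} identifies $\bld x^*$ as the unique v-GNE.

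For (i) I construct a zero of $\ca A+\ca B$ explicitly. Let $\bld x^*$ be the unique v-GNE and $\lambda^*$ an associated multiplier solving \eqref{eq:KKT_VI}. Define $\bld\lambda^*\coloneqq\bld 1_N\otimes\lambda^*$, so that $\bld V\bld\lambda^*=\bld 0$ and $\bld L\bld\lambda^*=\bld 0$; then the third block inclusion holds trivially and the first block becomes the stacking of the first row of \eqref{eq:KKT_VI}, also satisfied. It remains to choose $\bld\sigma^*$ so that the second block inclusion is met. Since $A\bld x^*-b\in N_{\bR^{m}_{\geq 0}}(\lambda^*)$ by \eqref{eq:KKT_VI}, set $d_i\coloneqq\tfrac{1}{N}(A\bld x^*-b)\in N_{\bR^{m}_{\geq 0}}(\lambda^*)$ (closure under positive scaling) and collect $\bld d\coloneqq\col((d_i)_{i\in\ca N})\in N_{\bR^{mN}_{\geq 0}}(\bld\lambda^*)$. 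Then $(\bld 1_N^\top\otimes I_m)(\Lambda\bld x^*-\bar b-\bld d)=A\bld x^*-b-(A\bld x^*-b)=\bld 0$, so $\Lambda\bld x^*-\bar b-\bld d\in\ker(\bld 1_N^\top\otimes I_m)=\mathrm{range}(\bld V^\top)$. Hence some $\bld\sigma^*$ solves $\rho\bld V^\top\bld\sigma^*=\Lambda\bld x^*-\bar b-\bld d$, and $\col(\bld x^*,\bld\sigma^*,\bld\lambda^*)\in\zer(\ca A+\ca B)$.

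The main hurdle I foresee is the careful handling of the normal-cone element in both directions: in (ii) showing that coordinate aggregation preserves the normal-cone membership, and in (i) exhibiting an explicit decomposition $\bld d\in N_{\bR^{mN}_{\geq 0}}(\bld\lambda^*)$ whose aggregate equals exactly $A\bld x^*-b$ so that the residual lies in $\mathrm{range}(\bld V^\top)$. The remainder is algebraic bookkeeping with the identities $V\bld 1_N=\bld 0$, $L\bld 1_N=\bld 0$, and the mixed-product property of the Kronecker product.
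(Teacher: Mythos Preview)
Your proof is correct and follows essentially the same route the paper indicates: the paper omits the details but points to the identity $\ker(V)=\ker(L)=\mathrm{span}(\bld 1)$ and to \cite[Th.~2]{yi:pavel:2019:operator_splitting_approach}, and its own proof of the closely related Theorem~\ref{th:eq_are_vGNE_mod_map} performs exactly the premultiplication by $(\bld 1^\top\otimes I_m)$ together with the normal-cone aggregation $\sum_{i\in\ca N}N_{\bR^m_{\geq 0}}(\lambda^*)=N_{\bR^m_{\geq 0}}(\lambda^*)$ that you use. Your explicit construction for part~(i), splitting $A\bld x^*-b$ uniformly and invoking $\mathrm{range}(\bld V^\top)=\ker(\bld 1_N^\top\otimes I_m)$, is the standard lifting argument and fills in precisely the details the paper leaves to the reference.
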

%\smallskip
 The proof is attained by exploiting the property that $\mathrm{ker}(V)=\mathrm{ker}(L)$, for the graph described in Section~\ref{subsubsec:communication}. The steps are similar to those in \cite[Th.~2]{yi:pavel:2019:operator_splitting_approach}.  We omit them here for brevity reasons.

Several researchers have analyzed the problem of finding a zero of the sum of two monotone operators. The so called \textit{splitting methods} represent one of the most popular approach developed to attain an iterative algorithm to solve this class of problem - see \cite{eckstein1989:splitting}, \cite[Ch.~26]{bauschke2011convex}.  

%The problem of finding the zeros of the sum of two monotone operators is a very well knowk problem in the literature and several splitting method can be adopted to iteratively solve it - see \cite{eckstein1989:splitting}, \cite[Ch.~26]{bauschke2011convex}.  
\smallskip
\begin{lemma}\label{lem:max_mon_of_operators}
The mappings $\ca A$ and $\ca B$ in \eqref{eq:operators_def} are maximally monotone. Moreover, $\ca B $ is  $\chi $-cocoercive, where $\chi \coloneqq \min\left\{\tfrac{\alpha}{\ell^2},\,\lambda_{\max}(L)^{-1} \right\}$. \hfill\QEDopen
\end{lemma}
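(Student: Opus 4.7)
The plan is to split each operator into pieces whose monotonicity properties are standard, then combine.

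For $\ca A$, I would write it as the sum $\ca A = \ca A_1 + \ca A_2$, where
\begin{equation*}
\ca A_1 : \varpi \mapsto \begin{bmatrix} 0 & 0 & \Lambda^\top \\ 0 & 0 & -\rho\bld V \\ -\Lambda & \rho\bld V^\top & 0 \end{bmatrix}\varpi
\end{equation*}
is a bounded, single-valued, skew-symmetric linear map, hence maximally monotone with full domain, and
\begin{equation*}
\ca A_2 : \varpi \mapsto N_{\bld \Omega}(\bld x) \times \{0\} \times N_{\bR^{mN}_{\geq 0}}(\bld \lambda)
\end{equation*}
is the normal cone of the closed convex nonempty product set $\bld \Omega \times \bR^{mE} \times \bR^{mN}_{\geq 0}$, hence maximally monotone. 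Because $\mathrm{dom}(\ca A_1)=\bR^{n+mE+mN}$, maximal monotonicity of the sum follows from the standard domain-condition result \cite[Cor.~25.5]{bauschke2011convex}.

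For $\ca B$, note that it is block-separable:
\begin{equation*}
\ca B (\bld x,\bld \sigma, \bld \lambda) = \bigl(F(\bld x),\, 0,\, \bar b + \bld{L\lambda}\bigr).
\end{equation*}
It suffices to show that each block is cocoercive with constant at least $\chi$; block-cocoercivity of $\ca B$ with constant $\chi=\min\{\alpha/\ell^2,\lambda_{\max}(L)^{-1}\}$ then follows, and maximal monotonicity is a consequence of cocoercivity with full domain. For the first block, Standing Assumption~\ref{ass:subgrad_strong_mon} gives $\alpha$-strong monotonicity and $\ell$-Lipschitz continuity of $F$, so
\begin{equation*}
\langle F(\bld x)-F(\bld y),\bld x-\bld y\rangle \geq \alpha\|\bld x-\bld y\|^2 \geq \tfrac{\alpha}{\ell^2}\|F(\bld x)-F(\bld y)\|^2,
\end{equation*}
i.e.\ $F$ is $\alpha/\ell^2$-cocoercive. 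For the third block, $\bld L = L\otimes I_m$ is symmetric positive semidefinite and $\lambda_{\max}(\bld L)=\lambda_{\max}(L)$, so for any $\bld \lambda,\bld \mu$, setting $\bld z = \bld \lambda - \bld \mu$,
\begin{equation*}
\langle \bld L \bld z,\bld z\rangle \geq \tfrac{1}{\lambda_{\max}(L)}\|\bld L \bld z\|^2,
\end{equation*}
which is exactly $\lambda_{\max}(L)^{-1}$-cocoercivity of $\bld \lambda \mapsto \bar b+\bld{L\lambda}$ (the constant $\bar b$ does not affect the inequality). The zero middle block is trivially cocoercive with any constant. Taking the minimum of the two nontrivial constants yields cocoercivity of $\ca B$ with modulus $\chi$.

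The routine but only delicate step is checking that cocoercivity of a block-diagonal operator is inherited from its blocks with the worst constant: for any two points the cross-terms vanish because the blocks act on disjoint coordinates, so the inequalities add up directly. I would state this as a short lemma or simply verify it inline. Everything else reduces to citing Bauschke--Combettes for the sum rule under the domain condition and for the equivalence between $\chi$-cocoercivity and firm nonexpansiveness of $\chi \ca B$, which in turn implies maximal monotonicity.
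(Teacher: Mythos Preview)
Your proposal is correct and follows essentially the same route as the paper: the identical decomposition $\ca A=\ca A_1+\ca A_2$ with the skew-symmetric/normal-cone split and the sum rule \cite[Cor.~25.5]{bauschke2011convex}, and for $\ca B$ the same blockwise cocoercivity estimate (strong monotonicity plus Lipschitz for $F$, and $y^\top M y\geq \lambda_{\max}(M)^{-1}\lVert My\rVert^2$ for the Laplacian block) combined by taking the minimum constant. The only cosmetic difference is that the paper writes the cocoercivity inequality for $\ca B$ out in one inline chain rather than phrasing it as a block-diagonal lemma.
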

\begin{proof}
See \ref{app:proof_synch_case}
\end{proof}

The properties of the operators proved above drive us to select the  \textit{preconditioned forward-backward} splitting (PFB) to derive a distributed and iterative algorithm seeking $\mathrm{zer}(\ca A+\ca B)$. 
This approach was previously adopted by other researchers, e.g., \cite{yi:pavel:2019:operator_splitting_approach}.

%\blue{The communication between agents is the most time-consuming operation that occur in these multi-agent algorithms (\red{cite}).
%The PFB requires only one round of communications at each iteration, while  other splittings like the Tseng's Splitting \cite[Ch.~28.6]{bauschke2011convex} or the Douglas-Rachford Splitting \cite[Ch.~28.3]{bauschke2011convex} are more demanding form this point of view; this represents the main reason that drive our choice toward the PFB. }
The PFB splitting operator reads as 
\begin{equation}
\label{eq:T_PFB_operator}
T \coloneqq \mathrm{J}_{\Phi^{-1}\ca A}\circ(\Id-\Phi^{-1}\ca B)\, .
\end{equation}
The so-called preconditioning matrix $\Phi$ is defined by
 \begin{equation}
 \label{eq:preconditioning_matrix}
 \Phi:=\begin{bmatrix}
	\bld \tau^{-1}   & 0  & -\Lambda^\top\\
	0 & \delta^{-1}I_{mM}  & \rho\bld V \\
	-\Lambda & \rho\bld V^\top & \bld \varepsilon^{-1}
 \end{bmatrix}
\end{equation}   
where $\delta\in\bR_{>0}$,  $\bld \varepsilon = \diag((\varepsilon_i)_{i\in\ca N})\otimes I_m $ with $\varepsilon_i>0$, for all $i\in \ca N$ and $\bld \tau$ is defined in a similar way.

The update rule of the algorithm is obtained by including a relaxation step, i.e.,
\begin{equation}
\label{eq:Krasno_iter_synch}
\begin{split}
\tilde\varpi(k) &= T \varpi(k) \\
\varpi(k+1)&=\varpi(k)+\eta (\tilde\varpi(k)-\varpi(k))\:.
\end{split}
\end{equation} 

It comes from \eqref{eq:T_PFB_operator} that $\fix(T)=\mathrm{zer}(\ca A + \ca B)$, in fact $\varpi \in\fix(T) \iffs \varpi \in T \varpi \iffs 0 \in \Phi^{-1}(\ca A+\ca B)\varpi \iffs \varpi\in\zer(\ca A +\ca B)$, see~\cite[Th.~26.14]{bauschke2011convex}. %By this computation we translated a zero-finding problem  into the fixed point problem associated to $T$ in \eqref{eq:T_PFB_operator}.

In the remainder of this section, we provide the complete derivation of SD-GENO, obtained directly from~\eqref{eq:Krasno_iter_synch}. In the following, we denote  $\varpi\coloneqq\varpi(k)$, $\varpi^+\coloneqq\varpi(k+1)$ and $\tilde\varpi\coloneqq \tilde\varpi(k)$ to simplify the notation. 
Consider  $\tilde \varpi=T\varpi$. From \eqref{eq:T_PFB_operator} it holds that $\tilde \varpi = \mathrm{J}_{\gamma\Phi^{-1}\ca A}\circ(\Id-\gamma\Phi^{-1}\ca B)\varpi \iffs \Phi(\varpi -\tilde\varpi) \in \ca A \tilde \varpi +\ca B \varpi$, thus
\begin{equation}\label{eq:inclusion_synch}
\bld 0\in  \ca A \tilde \varpi +\ca B \varpi + \Phi(\tilde\varpi -\varpi)\:.
\end{equation}
The update rule of each components of $\varpi$ is attained by analyzing the row blocks of~\eqref{eq:inclusion_synch}.
The first reads as
 $\bld 0 \in F(\bld x) +N_{\bld \Omega}(\tilde{\bld x}) + \bld \tau^{-1}(\tilde{\bld x} -\bld x) + \Lambda^\top\bld \lambda$. By solving this inclusion by $\bld{\tilde x}$, we attain the update rule for the primal variables:
\begin{equation}\label{eq:row1_sync}
\tilde{\bld x} = \mathrm{J}_{N_{\bld \Omega}} \circ\big(\bld x-\bld \tau(F(\bld x)+\Lambda^\top \bld \lambda ) \big)\,.
\end{equation}
Similarly, from the second row block of~\eqref{eq:inclusion_synch}, we attain the update for $\bld{\tilde\sigma}$, i.e., 
\smallskip
\begin{equation}\label{eq:row2_sync}
\tilde{\bld \sigma} = \bld \sigma +  \delta \rho\bld V\bld \lambda \:.
\end{equation} 
Finally, the third row block of \eqref{eq:inclusion_synch} is $\bld 0 \in \bar b +\bld{L\lambda}+N_{\bR^{mN}_{\geq 0}}(\tilde{\bld \lambda}) +\Lambda(2\tilde{\bld x}-\bld x) + \rho\bld V^\top(2\tilde{\bld \sigma}-\bld \sigma)+ \bld \varepsilon^{-1}(\tilde{\bld\lambda}-\bld\lambda)$, from which we obtain 
\begin{equation}\label{eq:row3_sync}
\begin{split}
\tilde{\bld \lambda} &= \mathrm{J}_{N_{\bR^{mN}_{\geq 0}}}\circ \big(\bld \lambda+\bld \varepsilon( \Lambda (2\tilde{\bld x}-\bld x) -\bar b - \rho\bld V^\top(2\tilde{\bld \sigma}-\bld\sigma ) -\bld{L\lambda}) \big) \\
& \stackrel{\text{\eqref{eq:row2_sync}}}{=}  \mathrm{proj}_{\bR^{mN}_{\geq 0}}\big(\bld \lambda+\bld \varepsilon( \Lambda (2\tilde{\bld x}-\bld x) -\bar b \\
& \hspace{3cm} - \rho\bld V^\top\bld \sigma -(2\delta\rho^2+1)\bld{L\lambda}) \big)\:. \\
\end{split}
\end{equation}

Note that, the update of $\bld{\tilde \lambda}$ depends only on the aggregate information $\bld V^\top \bld \sigma$. We can exploit this feature to replace the edge auxiliary variables $\sigma_l$'s, with a single variable for each agent $i$ defined by  $z_i\coloneqq \big([V^\top]_i\otimes I_m\big) \,\bld{\sigma}\in\bR^{mN}$. Recalling that $\bld V^\top\bld V =L\otimes I_m\eqqcolon \bld L$, we compute the update rule of these new variables and replace \eqref{eq:row2_sync} by
\begin{equation}\label{eq:row2_sync_z}
\tilde{\bld z} =\bld V^\top \bld{\sigma} +  \delta \rho\bld V^\top\bld V\bld \lambda  =\bld z +  \delta \rho\bld L\bld \lambda \:.
\end{equation} 
Consequently, \eqref{eq:row3_sync} is modified accordingly as
  \begin{equation}\label{eq:row3_sync_z}
\begin{split}
\tilde{\bld \lambda} &= \mathrm{proj}_{\bR^{mN}_{\geq 0}} \big(\bld \lambda+\bld \varepsilon( \Lambda (2\tilde{\bld x}-\bld x) -\bar b \\
& \hspace{3cm} - \rho\bld{z} -(2\delta\rho^2+1)\bld{L\lambda}) \big)\:. \\
\end{split}
\end{equation} 
To ensure that this change of variables does not affect the equilibrium of the game, we introduce the following result proving that an equilibrium point of the new set of equations is indeed a v-GNE of~\eqref{eq:game_formulation}.
\smallskip
\begin{theorem}\label{th:eq_are_vGNE_mod_map}
If $\col(\bld x^*, \bld z^*, \bld \lambda^* )$ is a solution to the equations \eqref{eq:row1_sync}, \eqref{eq:row2_sync_z},  \eqref{eq:row3_sync_z}, with $\bld 1^\top\bld z^*=0$, then $\bld x^*$ is a v-GNE of the game in~\eqref{eq:game_formulation}.   
\hfill\QEDopen 
\end{theorem}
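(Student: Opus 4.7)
The plan is to reduce the node-variable equilibrium conditions back to the edge-variable KKT system \eqref{eq:mod_compact_KKT_game}, so that Proposition~\ref{prop:zer_AB_are_vGNE} delivers the v-GNE claim for free. Concretely, given a triple $(\bld x^*,\bld z^*,\bld \lambda^*)$ that satisfies the three fixed-point equations, I would construct an auxiliary edge vector $\bld \sigma^*\in\bR^{mE}$ with $\bld V^\top\bld \sigma^*=\bld z^*$ and verify that $\col(\bld x^*,\bld \sigma^*,\bld \lambda^*)$ lies in $\zer(\ca A+\ca B)$. The assumption $\bld 1^\top\bld z^*=0$ (read as $(\bld 1_N^\top\otimes I_m)\bld z^*=\bld 0_m$) is precisely what makes this reverse change-of-variable well-posed.

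The steps, in order. First, specialize~\eqref{eq:row2_sync_z} at the fixed point: $\tilde{\bld z}=\bld z^*$ forces $\delta\rho\,\bld L\bld \lambda^*=\bld 0$, hence $\bld L\bld \lambda^*=\bld 0$. Because $\ca G$ is connected, $\ker(\bld L)=\ker(\bld V)=\mathrm{span}\{\bld 1_N\otimes e_j\}_{j=1}^m$, so $\lambda_1^*=\cdots=\lambda_N^*$; moreover $\|\bld V\bld \lambda^*\|^2=(\bld \lambda^*)^\top\bld L\bld \lambda^*=0$ gives $\bld V\bld \lambda^*=\bld 0$, which is the third line of~\eqref{eq:mod_compact_KKT_game}. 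Second, use $\bld 1^\top\bld z^*=0$ to conclude $\bld z^*\in\ker(\bld V)^\perp=\mathrm{range}(\bld V^\top)$, and pick any $\bld \sigma^*$ with $\bld V^\top\bld \sigma^*=\bld z^*$. Third, substitute $\bld V^\top\bld \sigma^*=\bld z^*$ and $\bld L\bld \lambda^*=\bld 0$ into~\eqref{eq:row1_sync} and~\eqref{eq:row3_sync_z} at equilibrium and invert the resolvents through the elementary identity $\bld x=\mathrm{J}_{N_C}(\bld y)\iffs \bld y-\bld x\in N_C(\bld x)$; since $\bld \tau$ and $\bld \varepsilon$ are block-diagonal with strictly positive entries, the rescaling by them can be absorbed into the conic normal cones. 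One reads off
\begin{equation*}
\bld 0\in F(\bld x^*)+\Lambda^\top\bld \lambda^*+N_{\bld \Omega}(\bld x^*),\qquad
\bld 0\in \bar b-\Lambda\bld x^*+\bld L\bld \lambda^*+\rho\bld V^\top\bld \sigma^*+N_{\bR^{mN}_{\geq 0}}(\bld \lambda^*),
\end{equation*}
the first two inclusions of~\eqref{eq:mod_compact_KKT_game}. Combined with $\bld V\bld \lambda^*=\bld 0$, this shows $\col(\bld x^*,\bld \sigma^*,\bld \lambda^*)\in\zer(\ca A+\ca B)$, and Proposition~\ref{prop:zer_AB_are_vGNE} identifies $\bld x^*$ as the unique v-GNE of~\eqref{eq:game_formulation}.

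The only genuinely delicate point is the reverse direction of the change of variables $\bld z=\bld V^\top\bld \sigma$: the forward map is automatic, but inverting it requires membership in $\mathrm{range}(\bld V^\top)$, which is exactly what the hypothesis $\bld 1^\top\bld z^*=0$ encodes once lifted to the Kronecker factor $I_m$. Pinning down this identification on $\bR^{mN}$ (rather than on the scalar level where $\ker V=\mathrm{span}(\bld 1_N)$ is obvious) is the step that warrants care; everything else is bookkeeping on the fixed-point equations and a direct appeal to Proposition~\ref{prop:zer_AB_are_vGNE}.
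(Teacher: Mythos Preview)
Your proof is correct but follows a different route from the paper. The paper does not reconstruct an edge variable $\bld\sigma^*$ and does not invoke Proposition~\ref{prop:zer_AB_are_vGNE}; instead, after deducing $\bld L\bld\lambda^*=\bld 0$ (hence consensus), it left-multiplies the stationary form of~\eqref{eq:row3_sync_z} by $(\bld 1_N^\top\otimes I_m)$, uses $\sum_{i\in\ca N} N_{\bR^m_{\geq 0}}(\lambda^*)=N_{\bR^m_{\geq 0}}(\lambda^*)$ together with the hypothesis $(\bld 1_N^\top\otimes I_m)\bld z^*=\bld 0$ to annihilate the $\bld z^*$ term, and lands directly on the aggregated KKT inclusion $\bld 0\in N_{\bR^m_{\geq 0}}(\lambda^*)+b-A\bld x^*$; combined with the primal inclusion from~\eqref{eq:row1_sync} this is exactly~\eqref{eq:KKT_VI}, and the v-GNE conclusion follows from \cite[Th.~3.1]{facchinei:fischer:piccialli:07}. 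Your approach is more modular---it cleanly reduces the node-variable statement to the edge-variable one already handled by Proposition~\ref{prop:zer_AB_are_vGNE}, and it makes explicit that the role of $\bld 1^\top\bld z^*=0$ is precisely $\bld z^*\in\mathrm{range}(\bld V^\top)$. The paper's approach is slightly more self-contained, bypassing the auxiliary $\bld\sigma^*$ and the appeal to Proposition~\ref{prop:zer_AB_are_vGNE} at the cost of redoing the aggregation argument.
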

\begin{proof}
See \ref{app:proof_synch_case}.
\end{proof}
\smallskip
\begin{remark}
\label{rem:O(1)}
In~\cite{yi:pavel:2019:asynch_distributed_GNE_w_partial_info}, the algorithm  SYDNEY  achieves convergence to the v-GNE of the game~\eqref{eq:game_formulation}, when this is subject to equality coupling constraints only. This solution relies on edge auxiliary variables to enforce the consensus of the $\lambda_i$'s. Therefore, the number of variables that each agent has to store is $\ca O(N)$. 

The change of ``variables'', from $\bld \sigma$ to $\bld z$, is particularly useful in large not-so-sparse networks and it is in general convenient when the  edges outnumber the nodes. In fact, it implies that the number of variables grows linearly with the number of nodes, and thus the memory requirement for each player is $\ca O(1)$. This fact and the possibility to handle affine coupling constraints represent the main advantages of adopting SD-GENO over SYDNEY.  \hfill\QEDopen 
\end{remark}

\subsection{Synchronous, distributed algorithm with node variables (SD-GENO)} 
\begin{algorithm}[t]
\DontPrintSemicolon
\textbf{Input:} $k=0$, for all $i\in\ca N$, $x_i(0) \in \bR^{n_i} $, $\lambda_i(0) \in\bR^{m}$, $ z_i(0)=\bld 0_{m}$.\;
\hspace{0.1cm} Choose $\delta,\, \varepsilon_i,\, \tau_i$ satisfying~\eqref{eq:step_choices}, while $\eta\in(0,1)$ and $\rho\in(0,1]$. \;
 \textbf{Iteration $k$:} \\
 \textbf{Communication:} each $i\in\ca N$ gathers $\lambda_j(k)$ from the neighbors and updates the disagreement vector $d_i(k)\coloneqq \sum_{j\in\ca N_i }  (\lambda_{i} - \lambda_{j})$ \;
 \textbf{Local update, }
\For{$i\in\ca N$}{
$\tilde{ x}_{i}=\mathrm{proj}_{\Omega_i} \big( x_{i}-\tau_i(\nabla_i f_i(x_{i},\bld x_{-i})+ A_i^\top \lambda_{i} ) \big) $ \;
$\tilde{z}_{i} = z_{i} +\rho\,\delta\, d_i(k) $\;
$\tilde{\lambda}_{i}  = \mathrm{proj}_{\bR^{m}_{\geq 0}} \big( \lambda_{i}+\varepsilon_i\left( A_i(2\tilde{x}_{i} - x_{i}) - b_i\right. $\;
$\hspace{2cm } -\rho z_{i} -(2\delta\rho^2+1)d_i(k) \big) \big)$\;
$x_{i}^+ =  x_{i} +\eta(\tilde{ x}_{i} - x_{i})$\;
$z_{i}^+ =  z_{i} +\eta(\tilde{ z}_{i} -z_{i})$ \;
$\lambda_{i}^+ = \lambda_{i} +\eta(\tilde{ \lambda}_{i} - \lambda_{i})$\;
}
$k\leftarrow k+1$\;
\caption{SD-GENO}
\label{alg:synch_alg}
\end{algorithm}

The complete formulation of the algorithm is obtained by gathering together all the update rules introduced in the previous section, i.e.,  \eqref{eq:row1_sync}, \eqref{eq:row2_sync_z}, \eqref{eq:row3_sync_z} and adding a relaxation step. The algorithm in compact form is expressed as 
\begin{equation}\label{eq:update_compact_complete_alg}
\begin{cases}
&\tilde{\bld x} = \mathrm{proj}_{\bld \Omega} \big(\bld x-\bld \tau(F(\bld x)+\Lambda^\top \bld \lambda ) \big) \\
&\tilde{\bld z} = \bld z + \rho \delta \bld L \bld \lambda\\
&\tilde{\bld \lambda} = \mathrm{proj}_{\bR^{mN}_{\geq 0}} \big(\bld \lambda+\bld \varepsilon( \Lambda (2\tilde{\bld x}-\bld x) -\bar b\\
&\hspace{3cm} -\rho\bld z -(2\delta\rho^2+1)\bld{L\lambda}) \big)\\
&\bld x^+ = \bld x +\eta(\tilde{\bld x} -\bld x)\\
&\bld z^+ = \bld z  +\eta(\tilde{\bld z} -\bld z)\\
&\bld \lambda^+ = \bld \lambda +\eta(\tilde{\bld \lambda} -\bld \lambda) \:,\\
\end{cases}
\end{equation}  
while the local updates and the initial condition of SD-GENO are provided in Algorithm~\ref{alg:synch_alg}.
%where we used the fact that  $\mathrm{J}_S=\mathrm{proj}_S$ for a convex nonempty and closed set $S$ \cite[Ex.~23.4]{bauschke2011convex}. 
%
%The final formulation of SD-GENO is obtained from \eqref{eq:update_compact_complete_alg} specifying the local update rule for each player of the game.
It is composed of two main phases: the communication with the neighbors and the local update. First, each agent gathers the information about the strategies and the dual variables of the neighbors. Next, the local update is performed, based on a gradient descend and dual ascend structure. It is worth noticing that only one round of communication is required at each iteration of SD-GENO.

The convergence of SD-GENO to the v-GNE of the game in \eqref{eq:game_formulation} is proven in the following theorem.
\smallskip
\begin{theorem}\label{th:convergence_sync}
%Let $\vartheta>\frac{\ell^2}{2\alpha}$,  $\bld \varepsilon$, $ \delta$, $\bld \tau>0$ such that $\Phi-\vartheta I\succeq 0$ and  $\eta\in(0,\frac{4\alpha\vartheta-\ell^2}{2\alpha\vartheta})$. 
Set the step sizes $\varepsilon_i$, $ \delta$, $\tau_i$, for all $i\in\ca N$, such that
\begin{subequations}
\label{eq:step_choices}
\begin{equation}
\tau_i  \leq (\lVert A_i\rVert+\vartheta)^{-1}
\end{equation}
\begin{equation}
\label{eq:delta}
\delta  \leq (2 \rho+\vartheta)^{-1} 
\end{equation}
\begin{equation}
\label{eq:epsilon}
\varepsilon_i  \leq (\rho|\ca N_i| +\lVert A_i\rVert  +\vartheta )^{-1} \:,
\end{equation} 
\begin{equation}
\vartheta>\frac{1}{2\chi}
\end{equation}
\end{subequations}
 with $\chi$ as in Lemma~\ref{lem:max_mon_of_operators} and $\eta\in\big(0,\frac{4\chi\vartheta-1}{2\chi\vartheta}\big)$.
Then, the sequence $(\bld x(k))_{k\in\bN}$ generated by SD-GENO (Algorithm~\ref{alg:synch_alg}) converges to the v-GNE of the game in \eqref{eq:game_formulation}.    
\hfill\QEDopen
\end{theorem}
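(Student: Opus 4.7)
The plan is to recognize SD-GENO as a Krasnoselskii--Mann iteration of the preconditioned forward--backward operator $T$ in~\eqref{eq:T_PFB_operator} and then invoke the standard convergence theorem for averaged operators under a suitably weighted inner product. The argument has four stages: (i) show that the preconditioner $\Phi$ is positive definite and in fact $\Phi \succeq \vartheta I$; (ii) in the $\Phi$-induced metric, transfer the monotonicity/cocoercivity of $\mathcal{A}$ and $\mathcal{B}$ from Lemma~\ref{lem:max_mon_of_operators} to the preconditioned operators $\Phi^{-1}\mathcal{A}$ and $\Phi^{-1}\mathcal{B}$; (iii) conclude that $T$ is averaged and apply Krasnoselskii--Mann with relaxation parameter in the stated range; (iv) reconcile the original iteration in $(\bld x, \bld\sigma, \bld\lambda)$ with the node-variable iteration in $(\bld x, \bld z, \bld\lambda)$ actually implemented in Algorithm~\ref{alg:synch_alg}.

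For (i), I would apply Gershgorin's disk theorem row-by-row to the symmetric block matrix $\Phi$ in~\eqref{eq:preconditioning_matrix}. Each diagonal entry has the form $\tau_i^{-1}$, $\delta^{-1}$, or $\varepsilon_i^{-1}$, while the off-diagonal mass in the corresponding row is controlled by $\lVert A_i\rVert$, $2\rho$, or $\rho|\ca N_i| + \lVert A_i\rVert$ (using $[\bld V^\top \bld V]_{ii} = |\ca N_i|$). The bounds~\eqref{eq:step_choices} have been chosen precisely so that each row of $\Phi$ dominates its off-diagonals by at least $\vartheta$, yielding $\Phi \succeq \vartheta I$. For (ii), once $\Phi \succ 0$, the inner product $\langle \cdot, \cdot\rangle_\Phi \coloneqq \langle \cdot, \Phi\, \cdot\rangle$ is well defined; $\Phi^{-1}\mathcal{A}$ is maximally monotone in this metric because $\mathcal{A}$ is, and a direct calculation shows that $\Phi^{-1}\mathcal{B}$ inherits $\chi\vartheta$-cocoercivity in $\langle \cdot,\cdot\rangle_\Phi$ from the $\chi$-cocoercivity of $\mathcal{B}$ together with $\Phi^{-1} \preceq \vartheta^{-1} I$. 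Since $\vartheta > 1/(2\chi)$ implies $\chi\vartheta > 1/2$, the composition $T = \mathrm{J}_{\Phi^{-1}\mathcal{A}} \circ (\Id - \Phi^{-1}\mathcal{B})$ is $\tfrac{2\chi\vartheta}{4\chi\vartheta - 1}$-averaged in the $\Phi$-metric by~\cite[Prop.~4.32, Prop.~4.44]{bauschke2011convex}.

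For (iii), the relaxation range $\eta \in (0, \tfrac{4\chi\vartheta-1}{2\chi\vartheta})$ is exactly the admissible Krasnoselskii--Mann step for a $\tfrac{2\chi\vartheta}{4\chi\vartheta-1}$-averaged operator, so~\cite[Prop.~5.16]{bauschke2011convex} yields $\varpi(k) \to \varpi^* \in \fix(T) = \zer(\mathcal{A}+\mathcal{B})$ in the $\Phi$-norm, and hence in the standard norm. Proposition~\ref{prop:zer_AB_are_vGNE} then identifies the primal component $\bld x^*$ with the unique v-GNE of~\eqref{eq:game_formulation}. Stage (iv) is the bridge between the abstract iteration on $\bld\sigma$ and the $\bld z$-update of Algorithm~\ref{alg:synch_alg}: left-multiplying~\eqref{eq:row2_sync} by $\bld V^\top$ recovers~\eqref{eq:row2_sync_z}, and since~\eqref{eq:row3_sync} depends on $\bld\sigma$ only through $\bld V^\top \bld\sigma = \bld z$, the two iterations coincide on the $(\bld x, \bld\lambda)$ coordinates. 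The initialization $z_i(0) = \bld 0$ enforces $\bld 1^\top \bld z(k) = 0$ along the trajectory, so the limit $(\bld x^*, \bld z^*, \bld\lambda^*)$ meets the hypothesis of Theorem~\ref{th:eq_are_vGNE_mod_map} and $\bld x^*$ is the v-GNE.

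The main obstacle I anticipate is stage (ii), namely pinning down the cocoercivity constant of $\Phi^{-1}\mathcal{B}$ in the weighted metric: the Laplacian block $\bld L$ appearing in $\mathcal{B}$ interacts non-trivially with the skew block of $\Phi$, so one must be careful that the lower bound $\Phi \succeq \vartheta I$ translates cleanly into the cocoercivity constant $\chi\vartheta$ and is compatible with the averaged-operator threshold $\chi\vartheta > 1/2$. Once this technical estimate is in place, stages (i), (iii), and (iv) follow from standard operator-theoretic arguments.
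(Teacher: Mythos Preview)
Your proposal is correct and follows essentially the same route as the paper: Gershgorin-type diagonal dominance to obtain $\Phi\succeq\vartheta I$, then cocoercivity/maximal monotonicity of $\Phi^{-1}\ca B$ and $\Phi^{-1}\ca A$ in the $\Phi$-metric, averagedness of $T$ with constant $\tfrac{2\chi\vartheta}{4\chi\vartheta-1}$, Krasnosel'ski\u{\i}--Mann convergence, and finally the $\bld\sigma\mapsto\bld z=\bld V^\top\bld\sigma$ bridge together with Theorem~\ref{th:eq_are_vGNE_mod_map}. The technical point you flag in stage~(ii) is exactly the one the paper dispatches by citing an external lemma (Pavel's preconditioning lemma), so your concern is well placed but not an obstacle.
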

\begin{proof}
See \ref{app:proof_synch_case}.
\end{proof}
\smallskip
 
\section{Asynchronous Distributed Algorithm with Edge Variables (AD-GEED)}
\label{sec:asynch_case}
In the  case of heterogeneous agents with very different update rates, SD-GENO can converge slowly, due to its synchronous structure. To overcome this limitation, we introduce here the   \textit{\underline{A}synchronous \underline{D}istributed \underline{G}N\underline{E} Seeking Algorithm with \underline{Ed}ge variables} (AD-GEED). It uses edge auxiliary variables $\{\sigma_l\}_{l\in \{1\dots E\}}$ and an asynchronous update to compute the v-GNE of the game in~\eqref{eq:game_formulation}. As discussed in the previous section, this is not an ideal solution due to the poor scalability in the case of dense networks. A  refinement of this algorithm, which relies on node variables only, is provided in Section~\ref{sec:AD_GENO}. 
From a technical point of view, the asynchronicity is achieved by exploiting an asynchronous framework for fixed-point iterations, the  so called  ``ARock'' framework, developed in~\cite{peng2016arock}.% This allows us to model not only the asynchronicity, but also the delays in the communication between players. 
\smallskip
\subsection{Algorithm design}
The update rule in the asynchronous case, is similar to that in~\eqref{eq:Krasno_iter_synch}, with the main difference that, at each time instant, only one agent $i\in\ca N$  updates its strategy $x_i$, dual variable $\lambda_i$ and local auxiliary variables $\{\sigma_l\}_{l\in\ca E_i^{\textup{out}}}$. %Notice that address the elements in $\ca E_i^{\textup{out}}$ to the update of node $i$, instead of the ones in $\ca E_i^{\textup{in}}$, is an arbitrary choice. 
To mathematically formulate this concept we introduce $N$ diagonal matrices $\bld H_i$, where $[\bld H_{i}]_{jj}$ is $1$ if the $j$-th element of $\col( \bld x,\: \bld\sigma,\:\bld \lambda)$ is an element of $\col(x_i,\:\{\sigma_l\}_{l\in\ca E ^{\mathrm{out}}_i},\: \lambda_i)$ and $0$ otherwise. The matrix $\bld H_i$ triggers the update of those elements in $\varpi$ that are associated to agent $i$. 
We assume that the choice of the agent performing the update at time instant $k$ is ruled by an i.i.d. random  variable $\zeta(k)$, taking values in $\bld H:=\{ \bld H_i \}_{i\in\ca N}$. Given a discrete probability distribution $(p_1,\dots,p_N)$, let $\mathbb{P}[\zeta(k) = \bld H_i] = p_i$, for all $i\in \ca N$. Therefore, the update rule in the asynchronous case is cast as
\begin{equation}\label{eq:Krasno_asynch_zeta}
\varpi(k+1)=\varpi(k)+\eta \zeta(k)(T\varpi(k)-\varpi(k))\,.
\end{equation}

An illustrative example is now provided to clarify how to construct $\bld H$.  

\begin{example}
\label{ex:matrix_H}
Consider a game with $N=3$, $E=2$, $m=1$, $n_i=1,\, i=1,2,3$ and $\varpi$ is the collective vector of all the strategies and auxiliary variables in the game. The communication network is described by the undirected graph $\ca G$, where the arrows describe the convention adopted for the edges. 
\begin{figure}[H]
\centering
\includegraphics[scale = 0.25]{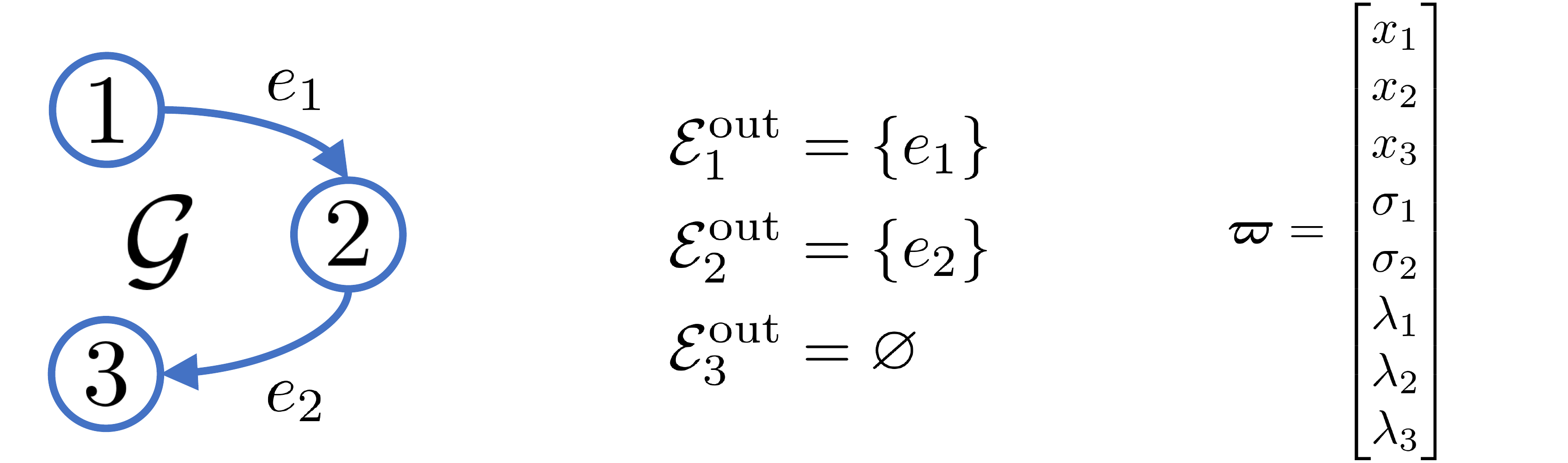}
%\caption{(a) Interactions of the players $\{1,\dots,8\}$ with the markets $A$, $B$, $C$, $D$, (b) Communication network between players arising from the competition.}\label{fig:H_i_ex} 
\end{figure}
In this case, $\bld H$ is a  set of three $8\times 8$ matrices, namely
\begin{equation*}
\begin{split}
\bld H_1 &\coloneqq \diag((1,0,0,1,0,1,0,0)) \\
\bld H_2 &\coloneqq \diag((0,1,0,0,1,0,1,0)) \\
\bld H_3 &\coloneqq \diag((0,0,1,0,0,0,0,1)) \:. \\
\end{split}
\end{equation*}
If at time $k$ agent $2$ is updating, ~\eqref{eq:Krasno_asynch_zeta} turns into
\begin{equation}
\varpi(k+1)=\varpi(k)+\eta \bld H_2(T\varpi(k)-\varpi(k))\,.
\end{equation}
So, the only elements of $ \varpi$ that change are $(x_2,\sigma_2,\lambda_2)$, precisely the variables associated to agent $2$.\hfill\QEDopen
\end{example}
\smallskip

In addition to asynchronicity, we generalize~\eqref{eq:Krasno_asynch_zeta} by considering possible delays in the information used by each agent in the local update, i.e., the information gathered from the neighbors may be outdated, denoted as ${\hat \varpi}$. These delays are due to a non-neglectable computation time  for the update of the agents, refer to \cite[Sec.~1]{peng2016arock} for a more complete overview on the topic. All the variables updated by the same agent share the same delay, e.g.,  $x_i$, $\lambda_i$ and $\{\sigma_l\}_{l\in\ca E ^{\mathrm{out}}_i} $ has a delay $\varphi_i(k)\geq 0$ at time instant $k$.  

\begin{algorithm}[!t]
\DontPrintSemicolon
\textbf{Input:} $k=0$, $\bld x^0 \in \bR^{n} $, $\bld \lambda^0 \in\bR^{mN}$, $\bld \sigma^0=\bld 0_{mM}$, chose $\delta,\,  \varepsilon_i ,\, \tau_i$ satisfying~\eqref{eq:step_choices} and $\eta\in(0,1)$.  \;
\hrule
\smallskip
\textbf{Iteration $k$:} Select the agent $i_k$ with probability $\mathbb{P}[\zeta(k)=\bld H_{i_k}]=p_{i_k}$\; 
\textbf{Reading:} Agent $i_k$ copies in its private memory the current values of the public memory, i.e. $\hat x_{j} $, $\hat \lambda_{j} $, $\forall  j\in\ca N_{i_k}$ and $\hat \sigma_{l}$, $\forall l\in\ca E^{\textup{in}}_{i_k}$ and $l\in\ca E^{\textup{out}}_{j}$.\; 

\vspace{0.25cm}
\textbf{Update:}\;
$\tilde{ x}_{i_k} = \mathrm{proj}_{\Omega_{i_k}} \big( x_{i_k}-\tau_{i_k}(\nabla_{i_k} f_{i_k}(x_{i_k},\hat{\bld x}_{-i_k}) + A_{i_k}^\top \lambda_{i_k} \big) $ \;\bigskip
$\tilde{\sigma}_{l} = \sigma_{l} + \delta\rho ([V]_{l}\otimes I_m )\hat{\bld\lambda} \:,\quad \forall l\in \ca E_{i^k}^{\mathrm{out}}$\;\bigskip
$\tilde{\lambda}_{i_k}  = \mathrm{proj}_{\bR^{m}_{\geq 0}} \bigg( \lambda_{i_k}+\varepsilon_{i_k}\big( A_{i_k}(2\tilde{x}_{i_k} - x_{i_k}) - b_{i_k} -  \rho([V^\top]_{i_k}\otimes I_m)\bld{\hat \sigma}  - (2\delta\rho^2+1)\sum_{j\in\ca N_{i_k}} (\lambda_i  - \hat \lambda_{j})\big) \bigg)$\;\bigskip
$x_{i_k}^+ =  x_{i_k} +\eta(\tilde{ x}_{i_k} - x_{i_k})$\;
$\sigma_{l}^+ =  \sigma_{l} +\eta(\tilde{ \sigma}_{l} -\sigma_{l})\:,\quad \forall l\in \ca E_{i^k}^{\textup{out}} $ \;
$\lambda_{i}^+ = \lambda_{i_k} +\eta(\tilde{ \lambda}_{i_k} - \lambda_{i_k})$\;

\vspace{0.25cm}
\textbf{Writing:} in the public memories of each  $j\in\ca N_{i_k}$\; 
$(x_{i_k}, \lambda_{i_k})\leftarrow (x_{i_k}^+, \lambda_{i_k}^+)$\;
$\{\sigma_{l}\}_{l\in\ca E^{\textup{out}}_{i_k}} \leftarrow \{\sigma_{l}^+\}_{l\in\ca E^{\textup{out}}_{i_k}}$\; \smallskip
 $k\leftarrow k+1$\;
\caption{AD-GEED}
\label{alg:AD-GEED}
\end{algorithm}

According to this, the final formulation of the update rule~\eqref{eq:Krasno_asynch_zeta} becomes

\begin{equation}\label{eq:Krasno_asynch_final}
\varpi(k+1)=\varpi(k)+\eta \zeta(k)(T-\Id)\hat\varpi(k)\,.
\end{equation} 

The only assumption that we impose over the delay, is boundedness, as formalized next. 
\smallskip
\begin{assumption}[Bounded maximum delay]\label{ass:bounded_delay}
The delays are uniformly upper bounded, i.e. there exists $\bar\varphi>0$ such that $\sup_{k\in\bN_{\geq 0}}\max_{i\in\ca N}\{ \varphi_i(k) \}\leq \bar\varphi<+\infty$. 
\hfill\QEDopen
\end{assumption} 
\smallskip

We assume that each agent $i$ is equipped with a public and private memory, the first one is used by the neighbors to write their strategy (and dual/auxiliary variables) at the end of each update. At the beginning of an update, agent $i$ copies the values from the public to the private memory, and uses them to complete the update.
%\red{On the other hand, the private memory is employed at the beginning of each update  to copy the current values in the public memory and use them to perform the update.}
 Notice that, during the update time of $i$, the neighbors can still write in the public memory of $i$ without affecting the values stored in the private one.   
The local update rules of AD-GEED are presented in Algorithm~\ref{alg:AD-GEED} and they are obtained with steps similar to those introduced in Sec.~\ref{sec:alg_develop_synch} for SD-GENO. To ease the notation, for each agent $j\in\ca N$, we define  $\hat x_{j} \coloneqq x_{j}(k+\varphi_j(k))$, $\hat \lambda_{j} \coloneqq \lambda_{j}(k-\varphi_j(k))$ and $\hat\sigma_{l} \coloneqq \sigma_{l}(k-\varphi_j(k))$, for all $l\in\ca E^{\textup{out}}_{j}$, and furthermore $\bld{\hat x} \coloneqq \mathrm{col}((\hat x_j)_{j\in\ca N})$, $\bld{\hat \lambda} \coloneqq \mathrm{col}((\hat\lambda_j)_{j\in\ca N})$, 
$\bld{\hat \sigma} \coloneqq \mathrm{col}((\hat\sigma_j)_{j\in\ca N})$. Notice that each agent has always access to the most recent value of its variables, i.e., for agent $i$ the delay $\varphi_i(k)=0$ for every $k$.

The following convergence theorem  is achived by exploiting the results in \cite{peng2016arock} for a Krasnosel'ski\u i asynchronous iteration.
\smallskip
\begin{theorem}\label{th:convergence_asynch_sigma} 
For every $i\in\ca N$, choose  $\varepsilon_i$, $ \delta$, $\tau_i$ as in~\eqref{eq:step_choices}, and let  $\eta\in \left( 0,   \frac{cNp_{\min}}{2\bar\varphi\sqrt{p_{\min}}+1} \left(2- \frac{1}{2\chi\vartheta}\right)\right]$ and $c\in(0,1)$. Then, the sequence $(\bld x(k))_{k\in\bN}$ generated by AD-GEED (Algorithm~\ref{alg:AD-GEED}) converges to the v-GNE of the game in \eqref{eq:game_formulation} almost surely.  \hfill\QEDopen
\end{theorem}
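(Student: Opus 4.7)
The plan is to view AD-GEED as an instance of the ARock asynchronous Krasnosel'ski\u i iteration from \cite{peng2016arock} applied to the preconditioned forward--backward operator $T$ in \eqref{eq:T_PFB_operator}, and then invoke the main convergence theorem of that framework. To this end, three structural ingredients must be verified: (a) $T$ is averaged in the metric induced by $\Phi$; (b) $\fix(T)$ corresponds to v-GNEs of \eqref{eq:game_formulation}; (c) the coordinate partition $\{\bld H_i\}_{i\in\ca N}$ induced by agent ownership is admissible for ARock.

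For (a), the step-size choice \eqref{eq:step_choices} with $\vartheta>1/(2\chi)$ is precisely what is needed to guarantee $\Phi\succ 0$ in the block-diagonal/off-diagonal sense established in the proof of Theorem~\ref{th:convergence_sync}. Endowing $\bR^{n+mN+mE}$ with the inner product $\langle \cdot,\cdot\rangle_\Phi$, the operator $\Phi^{-1}\ca B$ inherits $\chi$-cocoercivity from Lemma~\ref{lem:max_mon_of_operators}, while $\Phi^{-1}\ca A$ is maximally monotone in this metric. Consequently, $T = \mathrm{J}_{\Phi^{-1}\ca A}\circ(\Id - \Phi^{-1}\ca B)$ is $\eta_0$-averaged in the $\Phi$-metric with $\eta_0 = \bigl(2 - \tfrac{1}{2\chi\vartheta}\bigr)^{-1}$; this is the same averaging constant that drives the synchronous convergence in Theorem~\ref{th:convergence_sync}.

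For (b), Proposition~\ref{prop:zer_AB_are_vGNE} ensures $\fix(T)=\zer(\ca A+\ca B)\neq\varnothing$ and any $\varpi^*\in\fix(T)$ has first block equal to the unique v-GNE of \eqref{eq:game_formulation}. For (c), one checks that the matrices $\bld H_i$ in Example~\ref{ex:matrix_H} form a partition of the identity: the ownership of $x_i$ and $\lambda_i$ is unambiguous, and each edge variable $\sigma_l$ is assigned to the unique source agent $i$ with $l\in\ca E_i^{\mathrm{out}}$, so $\sum_i \bld H_i = I$ and the $\bld H_i$ are pairwise ``orthogonal'' coordinate projections, which is the block structure required by ARock. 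With the i.i.d.\ selection $\mathbb{P}[\zeta(k)=\bld H_i]=p_i>0$ and the bounded-delay Assumption~\ref{ass:bounded_delay}, iteration \eqref{eq:Krasno_asynch_final} fits the hypotheses of the ARock convergence theorem verbatim. Applying that theorem then yields almost sure convergence of $\varpi(k)$ to some $\varpi^*\in\fix(T)$ as soon as $\eta$ lies in the admissible range, which reads
\begin{equation*}
\eta\in\Bigl(0,\;\tfrac{cNp_{\min}}{2\bar\varphi\sqrt{p_{\min}}+1}\bigl(2-\tfrac{1}{2\chi\vartheta}\bigr)\Bigr],
\end{equation*}
with $c\in(0,1)$; this is precisely the bound in the statement, obtained by substituting the averaging constant $\eta_0^{-1} = 2-\tfrac{1}{2\chi\vartheta}$ into the ARock step-size formula. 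The first block of $\varpi^*$ is then the unique v-GNE by Theorem~\ref{th:eq_are_vGNE_mod_map} (after the $\bld\sigma\leftrightarrow\bld z$ equivalence used to derive AD-GEED's update rules).

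The main obstacle is step (a): establishing the averaging constant of $T$ in the weighted metric, and in particular certifying that the algebraic update rules of AD-GEED in Algorithm~\ref{alg:AD-GEED}—written out coordinate-wise with delayed reads $\hat x_j,\hat\lambda_j,\hat\sigma_l$—exactly realise the abstract recursion \eqref{eq:Krasno_asynch_final} driven by $T$ and the projectors $\bld H_i$. Once this identification and the averaging estimate are in place, invoking the ARock theorem is mechanical, and the step-size bound in the statement emerges directly from its hypotheses.
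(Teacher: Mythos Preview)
Your approach is essentially the paper's: recognise $T$ as $\frac{2\chi\vartheta}{4\chi\vartheta-1}$-averaged in the $\Phi$-metric (your $\eta_0$ equals this constant), rewrite the asynchronous iteration \eqref{eq:Krasno_asynch_final} in terms of the associated nonexpansive operator, and invoke the ARock convergence lemmas from \cite{peng2016arock}; the step-size bound then drops out exactly as you indicate. Two small corrections: in the $\Phi$-metric the cocoercivity constant of $\Phi^{-1}\ca B$ is $\chi\vartheta$, not $\chi$ (you use the right constant downstream, so this is only a slip in wording); and your closing sentence is off---AD-GEED works with the edge variables $\bld\sigma$ throughout, so no $\bld\sigma\leftrightarrow\bld z$ equivalence is involved, and the relevant fact is Proposition~\ref{prop:zer_AB_are_vGNE} (zeros of $\ca A+\ca B$ yield the v-GNE), not Theorem~\ref{th:eq_are_vGNE_mod_map}, which concerns the node-variable reformulation.
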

\begin{proof}
See \ref{app:prrof_AD_GEED}.
\end{proof}
\begin{remark}
If the probability distribution is uniform, i.e., $p_{\min}=\frac{1}{N}$, and we choose $\vartheta = \frac{1}{\chi}$, then the bounds on the relaxation step simplify as $\eta\in \left( 0,  \frac{3}{2}\frac{c \sqrt{N}}{2\bar\varphi+\sqrt{N}}   \right]$. Moreover, if there is no delay, so $\bar \varphi = 0$, or the number of agents is very high, the bounds may be chosen independently from the number of players, e.g., as $\eta \in(0, 1]$. \hfill \QEDopen
\end{remark}

The structure of AD-GEED is similar to that of ADAGNES in~\cite[Alg.~1]{yi:pavel:2019:asynch_distributed_GNE_w_partial_info}, where edge auxiliary variables are used to achieve consensus over the dual variables. However, unlike ADAGNES, our algorithm can handle inequality coupling constraints.%The main difference  between the two algorithms is the possibility of AD-GEED to handle inequality couply constraint, while ADAGNES is limited to equality coupling constraints.

\section{Asynchronous, distributed algorithm with node variables (AD-GENO)}
\label{sec:AD_GENO}

This section presents the main result of the paper, namely, we refine AD-GEED to obtain an algorithm with the same performance, in terms of convergence speed, but relying on node auxiliary variables only, i.e., the  \textit{\underline{A}synchronous \underline{D}istributed \underline{G}N\underline{E} Seeking Algorithm with \underline{No}de variables} (AD-GENO). 
Using Algorithm~\ref{alg:AD-GEED} as the starting point, we notice that the local update of $\lambda_i$ requires only the aggregate quantity $([V^\top]_i\otimes I_m)\bld \sigma$. Our key idea is to introduce a variable $z_i$ to capture the variation of this aggregate quantity and show that it does not affect the dynamics of the pair $( x_i, \lambda_i)$, thus preserving the convergence. 
Unlike the synchronous case, we cannot directly define $\bld z = \bld V^\top \bld \sigma$, due to the different update frequency between $\{\sigma_l\}_{l\in\ca E_i}$ and $z_i$ that would affect the dynamics of $\bld \lambda$. This mismatch is clarified via the following example.

\smallskip
\begin{example}\label{ex:sigma_z}
Consider the communication network in Example~\ref{ex:matrix_H} and assume that in the first three time instances, agent $2$ updates twice and then $3$ updates once, i.e., $i_0=i_1=2$ and $i_2=3$. For $k=1$, according to Algorithm~\ref{alg:AD-GEED}  it holds
\begin{equation}
\begin{split}
\sigma_2(2) &=  \sigma_2(1) + \eta\rho\delta (\lambda_2(1)-\lambda_3(0)) \\
\lambda_2(2) & \propto \rho(\sigma_2(1)-\sigma_1(0)) \:,
\end{split}
\end{equation}
where $\propto$ is used to describe dependency. Next, for $k=2$ only $\lambda_3 $ is updated, then
\begin{equation}
\label{eq:ex2_lambda}
\lambda_3(3) \propto -\rho\sigma_2(2)\:. 
\end{equation} 
If we substitute the edge variables $\sigma_1$, $\sigma_2$ with $z_i=[V^\top]_i\bld \sigma$ for $i=1,2,3$, and apply the same activation sequence, it leads to
\begin{equation}
\label{eq:ex2_zeta}
\begin{split}
z_3(3) &= z_3(0) + \eta\rho\delta (\lambda_3(0)-\lambda_2(2))\\ 
\lambda_3(3) & \propto \rho z_3(0)\:. 
\end{split}
\end{equation}
From the comparison of \eqref{eq:ex2_lambda} and \eqref{eq:ex2_zeta}, it is clear that the value of $\lambda_3(3)$ would be different in the two cases. This is explained by the fact that $\sigma_2$ is updated twice, while $z_3$ only once. \hfill\QEDopen   
\end{example}

\smallskip
\begin{algorithm}[!t]
\DontPrintSemicolon
\textbf{Input:} $k=0$, $\bld x(0) \in \bR^{n} $, $\bld \lambda(0) \in\bR^{mN}$, $\bld z(0)=\bld 0_{mN}$.  For every $i\in\ca N$, choose $\delta,\, \varepsilon_i,\, \tau_i$ satisfying~\eqref{eq:step_choices}, $\eta\in(0,1)$ and set $\mu_i=\bld 0_m$.  \;
\hrule
\smallskip
\textbf{Iteration $k$:} Select the agent $i_k$ with probability $\mathbb{P}[\zeta(k)=H_{i_k}]=p_{i_k}$\; 
\textbf{Reading:} Agent $i_k$ copies its public memory in the private one, i.e., the values $\hat x_{j}$, $\hat \lambda_{j}$, $\forall  j\in\ca N_{i_k}$, and $\mu_{i}$.\;
\hspace{0.1cm} Reset the public values of $\mu_{i}$ to  $\bld 0_m$.\; 

\vspace{0.25cm}
\textbf{Update:}\;
$\tilde{ x}_{i_k}  = \mathrm{proj}_{ \Omega_{i_k}} \big( x_{i_k}-\tau_{i_k}(\nabla_{i_k}  f_{i_k}(x_{i_k},\hat{\bld x}_{-i_k})+ A_{i_k}^\top \lambda_{i_k} ) \big) $ \;\smallskip
$\tilde{z}_{i_k} = z_{i_k} +\delta\eta\mu_{i_k}  $\;\smallskip
$\tilde{\lambda}_{i_k}  = \mathrm{proj}_{\bR^{m}_{\geq 0}} \left( \lambda_{i_k}+\varepsilon_{i_k}( A_{i_k}(2\tilde{x}_{i_k} - x_{i_k}) \right. - b_{i_k}-\rho\tilde z_{i_k}\left. +(2\delta\rho^2-1)\sum_{j\in\ca N_{i_k}\setminus \{i_k\}} (\lambda_{i_k}-\hat \lambda_{j}) \right)$\;\bigskip
$x_{i_k}^+ =  x_{i_k} +\eta(\tilde{ x}_{i_k} - x_{i_k})$\;
$z_{i_k}^+ =  \tilde z_{i_k} + \eta\delta\rho\sum_{l\in\ca E_{i_k}^{\mathrm{out}}} ([V]_l\otimes I_m )\hat{\bld \lambda} $ \;
$\lambda_{i_k}^+ = \lambda_{i_k} +\eta(\tilde{ \lambda}_{i_k} - \lambda_{i_k})$\;

\vspace{0.25cm}
\textbf{Writing:} in the public memory of each  $j\in\ca N_{i_k}$\;
$(x_{i_k},\lambda_{i_k}) \leftarrow (x_{i_k}^+,\lambda_{i_k}^+)$\;
$\mu_{j} \leftarrow \mu_{j} + \hat\lambda_{j} - \lambda_{i_k}$\;\smallskip
  
 $k\leftarrow k+1$\;
\caption{AD-GENO}
\label{alg:AD-GENO}
\end{algorithm}
\smallskip

To bridge the gap between $\bld \sigma$ and $\bld z$, we introduce an extra variable $\mu_i\in\bR^{m}$ for each node $i$. The role of $\mu_i$ is to store the changes of the neighbors dual variable $\lambda_j$ during the time between the last update of $i$ and the next one. The scalability of the algorithm is not affected by these additional variables, one for each agent. Therefore, the benefit of adopting only node variables, highlighted in Remark~\ref{rem:O(1)}, still hold also in this asynchronous counterpart. %, since the number of auxiliary variables is still proportional to the number of nodes in the network. 
Furthermore, also the number of required communications rounds between agents does not increase, since the variable $\mu_i$ is updated by the neighbors of agent $i$ during their writing phase.
Algorithm~\ref{alg:AD-GENO} presents the local update rules of AD-GENO.%, where $\mu_i$ is rigorously defined.

The convergence of AD-GENO is proven by the following theorem. Essentially, we show that introducing $\bld z$ and $\mu$ does not affect the dynamics of $(\bld x,\bld \lambda)$. 
\smallskip
\begin{theorem}\label{th:convergence_AD-GENO}
For every $i\in\ca N$ choose $\varepsilon_i$, $ \delta$, $\tau_i$ as in~\eqref{eq:step_choices}. Let $\eta\in \left( 0,   \frac{cNp_{\min}}{2\bar\varphi\sqrt{p_{\min}}+1} \left(2- \frac{1}{2\chi\vartheta}\right)\right]$ with $p_{\min}:=\min \{p_i\}_{i\in\ca N}$ and $c\in(0,1)$. Then, the sequence $(\bld x(k))_{k\in\bN}$ generated by AD-GENO (Algorithm~\ref{alg:AD-GENO}) converges to the v-GNE of the game in \eqref{eq:game_formulation} almost surely.  \hfill\QEDopen
\end{theorem}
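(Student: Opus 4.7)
The plan is to reduce the convergence of AD-GENO to that of AD-GEED (Theorem~\ref{th:convergence_asynch_sigma}) by showing that the two algorithms generate identical trajectories for the primal–dual pair $(\bld x(k),\bld \lambda(k))$ whenever they are driven by the same activation sequence $\{i_k\}_{k\in\bN}$, the same delay pattern $\{\varphi_i(k)\}$, and the same initial condition for $(\bld x,\bld \lambda)$, with $\bld \sigma(0)=\bld 0$ for AD-GEED and $\bld z(0)=\bld 0$, $\mu_i(0)=\bld 0$ for AD-GENO. Once this pathwise equivalence is established, the almost-sure convergence of $(\bld x(k))_{k\in\bN}$ produced by AD-GENO is inherited from Theorem~\ref{th:convergence_asynch_sigma} under the same step-size and relaxation-parameter bounds, which are identical in the two statements.

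I would proceed by induction on $k$, maintaining two invariants: (a) the pairs $(\bld x(k),\bld \lambda(k))$ produced by the two algorithms coincide; (b) at the moment agent $i_k$ activates, the intermediate quantity $\tilde z_{i_k}(k)$ computed inside AD-GENO satisfies $\tilde z_{i_k}(k)=\bigl([V^\top]_{i_k}\otimes I_m\bigr)\hat{\bld\sigma}(k)$, where $\hat{\bld\sigma}(k)$ is the (virtual) delayed edge vector of AD-GEED. The inductive step for (b) is the heart of the argument: between two consecutive activations of agent $i$, the only changes to $\bigl([V^\top]_{i}\otimes I_m\bigr)\bld\sigma$ in AD-GEED come from updates of incident edge variables $\{\sigma_l\}_{l\in \ca E_i^{\mathrm{in}}\cup\ca E_i^{\mathrm{out}}}$. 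The $\mu_i$ variable of AD-GENO, incremented as $\mu_i\leftarrow \mu_i+\hat\lambda_i-\lambda_{i_k}$ every time a neighbor $i_k$ writes into $i$'s public memory, accumulates exactly the contributions associated with the incident \emph{incoming} edges of $i$ in AD-GEED, while the term $\eta\delta\rho\sum_{l\in\ca E_{i_k}^{\mathrm{out}}}([V]_l\otimes I_m)\hat{\bld\lambda}$ folded into $z_{i_k}^+$ after the dual update accounts for the outgoing edges. A direct comparison of the two $\lambda$-updates then shows that the replacement of $-\rho([V^\top]_{i_k}\otimes I_m)\hat{\bld\sigma}$ by $-\rho\tilde z_{i_k}$ is algebraically compensated by the change of the Laplacian coefficient from $-(2\delta\rho^2+1)$ to $+(2\delta\rho^2-1)$ in the summation over the neighbors, so that the two right-hand sides coincide under invariant (b).

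The main obstacle will be the careful bookkeeping of delays: when a neighbor $j$ wrote into the public memory of $i$, it used its own (possibly delayed) view $\hat\lambda$, and so each increment of $\mu_i$ must be matched with the exact delayed value of $\bld \lambda$ that AD-GEED would have used in the corresponding virtual $\sigma_l$ update. I would formalize this by tagging each write event with the global time index at which it was produced, and by constructing an explicit bijection between the ordered sequence of neighbor writes to $i$'s public memory and the ordered sequence of updates of the edge variables $\{\sigma_l\}_{l\in\ca E_i^{\mathrm{in}}}$ in AD-GEED driven by the same $\{i_k\}$. Assumption~\ref{ass:bounded_delay} is essential at this point: the uniform upper bound $\bar\varphi$ on delays ensures that every write completes, and is therefore absorbed into $\mu_i$, before agent $i$ activates again, so no increment is ever dropped or double-counted. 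Once invariants (a)–(b) are propagated for all $k$, the conclusion of Theorem~\ref{th:convergence_asynch_sigma} transfers verbatim to the $\bld x$-sequence generated by AD-GENO, yielding almost-sure convergence to the unique v-GNE of \eqref{eq:game_formulation}.
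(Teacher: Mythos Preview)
Your plan is essentially the paper's own proof: run AD-GEED and AD-GENO on the same activation sequence and delay pattern, prove by induction that at each activation time the identity $\tilde z_{i_k}(k)=([V^\top]_{i_k}\otimes I_m)\hat{\bld\sigma}(k)$ holds, conclude that the $(\bld x,\bld\lambda)$-trajectories coincide, and inherit convergence from Theorem~\ref{th:convergence_asynch_sigma}. The paper organizes the induction over successive activation times of a fixed agent $i$ (tracking what happened to $([V^\top]_{i}\otimes I_m)\hat{\bld\sigma}$ between two such times via the sets $\ca S_j(k)$), while you phrase it over the global iteration counter with invariants (a)--(b); these are equivalent bookkeepings.

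One point in your write-up is confused, though. You say that the replacement of $-\rho([V^\top]_{i_k}\otimes I_m)\hat{\bld\sigma}$ by $-\rho\tilde z_{i_k}$ is ``algebraically compensated'' by the change of the Laplacian coefficient from $-(2\delta\rho^2+1)$ to $+(2\delta\rho^2-1)$, and that the two right-hand sides then coincide under invariant~(b). That cannot be right: if invariant~(b) holds, the $\sigma$/$z$ terms are \emph{equal}, so the two $\lambda$-updates agree only when the Laplacian coefficients are the same; any genuine change in that coefficient would produce a residual $4\delta\rho^2\sum_{j\in\ca N_{i_k}}(\lambda_{i_k}-\hat\lambda_j)$, not a compensation. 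The sign discrepancy you noticed between Algorithms~\ref{alg:AD-GEED} and~\ref{alg:AD-GENO} is a typographical slip in the paper, and the paper's own proof treats the two $\lambda$-updates as identical once \eqref{eq:condition_z_sigma} holds. Drop the ``compensation'' sentence and the argument is clean.
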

\begin{proof}
See \ref{app:AD_GENO}
\end{proof}

\section{Simulations}
\label{sec:simulations}
We conclude  by proposing two sets of simulations to validate the theoretical results in the previous sections and to highlight the performances of the proposed algorithm.
First, we apply AD-GENO on a network Cournot game and study how delays and different activation sequences affect the convergence. Then, we compare the total computation time required by  AD-GENO, AD-GEED and ADAGNES (in~\cite[Alg.~1]{yi:pavel:2019:asynch_distributed_GNE_w_partial_info}), over different communication graphs.% This highlights how AD-GENO is convenient also from a computational time point of view, on top of the lower number of variables used.

\subsection{AD-GENO convergence}
\label{sec:sim_AD_GENO_conv}
%To validate the theoretical results on the convergence of AD-GENO, we considered the problem of 
In a network Cournot game, $N$ firms compete over $m$ markets and the coupling constraints arise from the maximum markets capacities. We consider a smilar formulation to that proposed in~\cite{yu:2017distributed}.
%This section presents the implementation of  AD-GENO to solve a network Cournot game, that models the interaction of $N$ companies competing over $m$ markets. The problem is widely studied and we adopt a set-up similar to the one in \cite{yu:2017distributed,pavel2017:distributed_primal-dual_alg}. 
Here, we considered $N=8$ firms, with the possibility to act over $m=3$ markets, i.e., $x_i\in\bR^3$, for all $i\in\ca N$. The local production is bounded in $0\leq x_i \leq \overline x_i$, where each component of $\overline x_i\in\bR^{3}$ is randomly drawn from $[10,45]$. In Figure~\ref{fig:Markets}, the interaction of each firm with the markets is shown, where an edge is drawn between a firm and a market if the one of former's strategies is applied to the latter. Two firms are neighbors if they compete over the same market, therefore the communication network between the firms is the one in Figure~\ref{fig:Network}. 
\begin{figure}[ht]
\centering
\subfloat[][]
{\includegraphics[scale = 0.12]{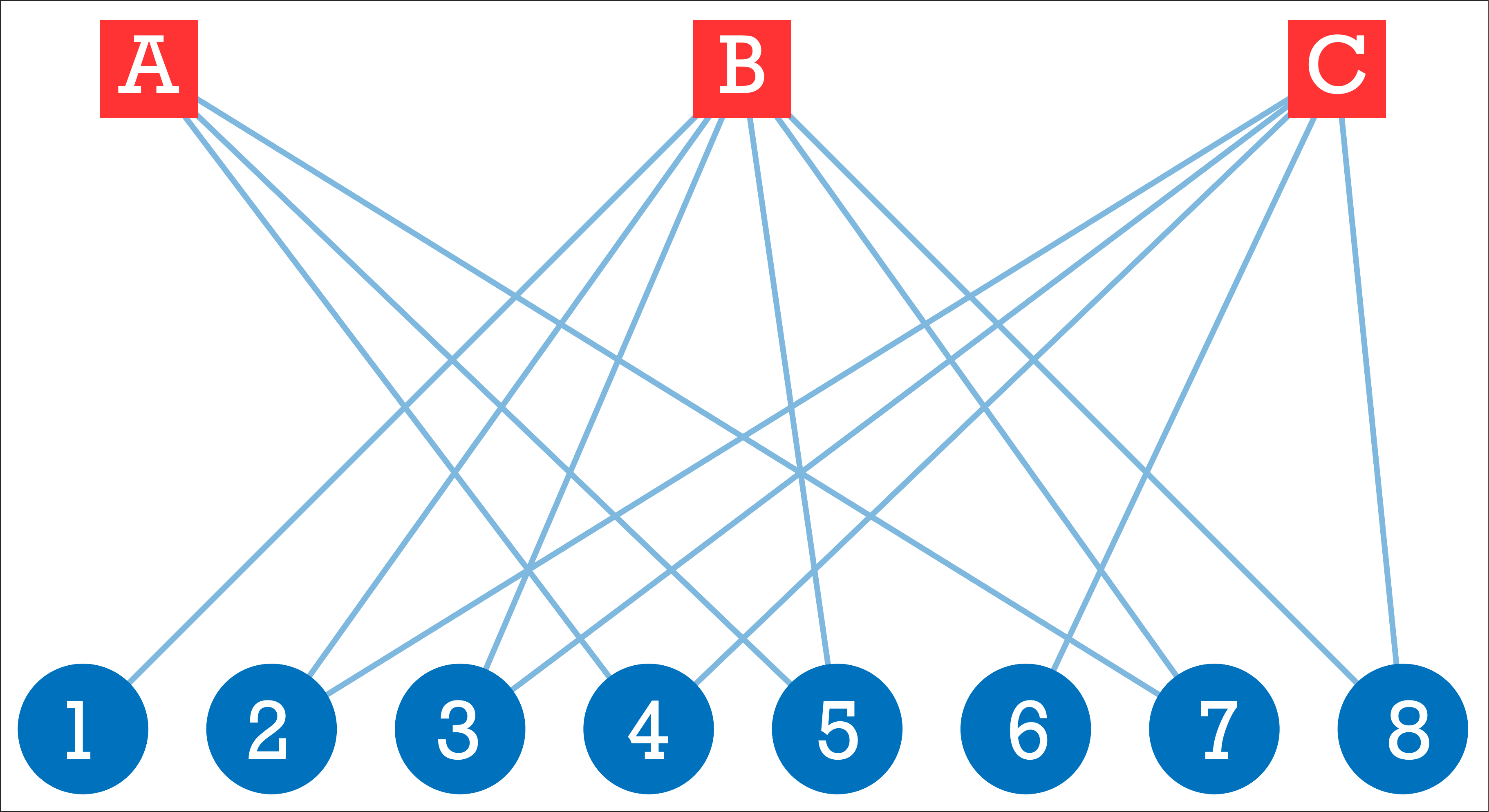}\label{fig:Markets}} \; %[width=.225\textwidth] % How it was
\subfloat[][]
{\includegraphics[scale = 0.10]{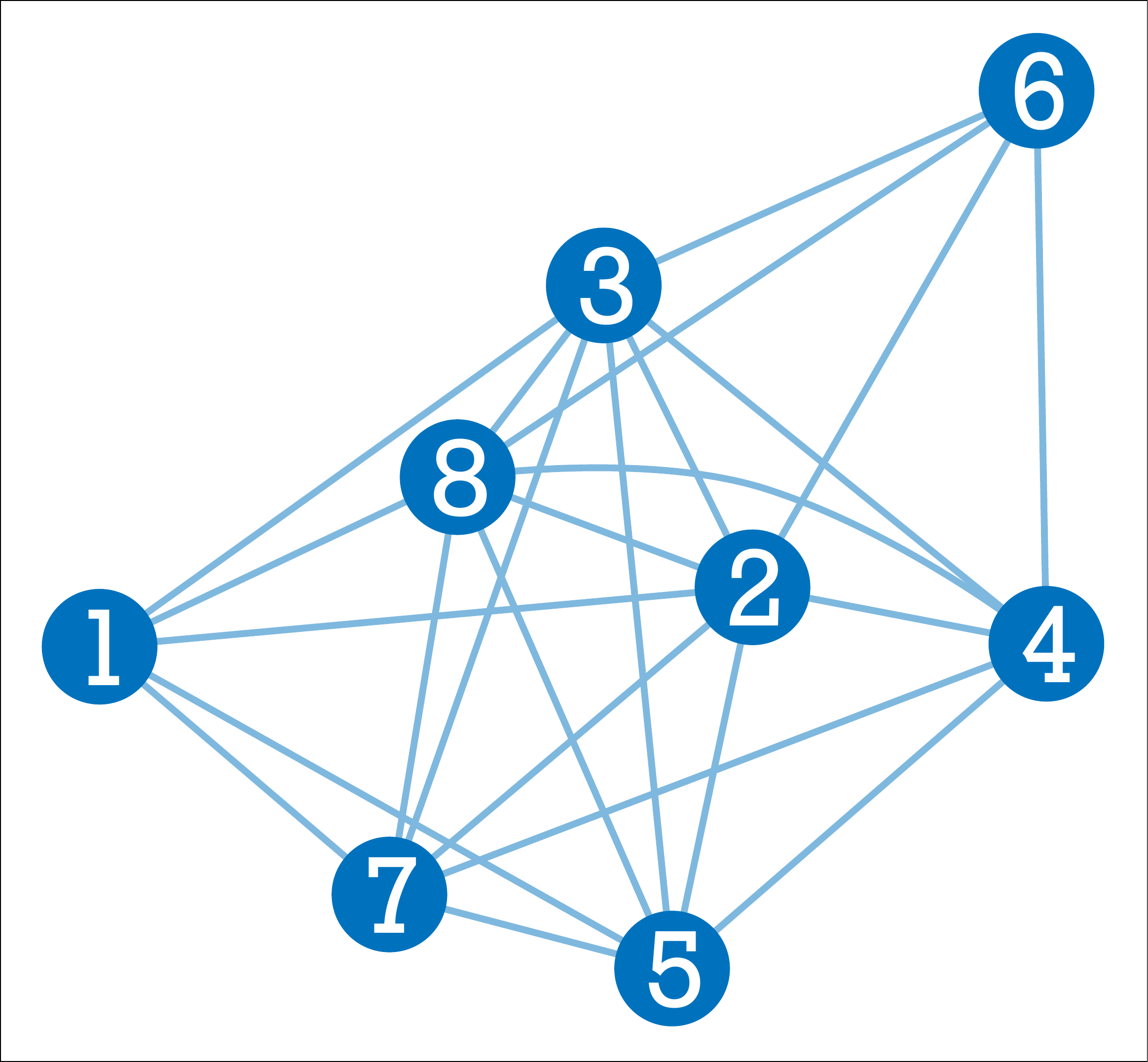}\label{fig:Network}} \\
\caption{(a) Action of players $\{1,\dots,8\}$ over the three markets $A$, $B$, $C$, $D$, (b) Communication network arising from the competition over the markets.}
\label{fig:markets_and_network}
\end{figure}
The coupling constraints are defined by $\bld{Ax}\leq  b$, where $\bld A\coloneqq[A_1,\dots,A_N]\in \bR^{3\times 24}$ while $b\in\bR^{3}$.
The element $[A_i]_{jk}$ is nonzero, if $[x_i]_k>0$ and it is applied to market $j$.
Each nonzero element in  $\bld A$ is randomly chosen from $[0.6,1]$, this value can be seen as the efficiency of a strategy on a market. The components of $b\in\bR^{3}$ are the capacities of the markets, randomly drawn from $[20,100]$. The local cost function is defined as $f_i(x_i,\bld x_{-i})\coloneqq c_i(\bld x) -P(\bld x)^\top A_ix_i$; $c_i(\bld x)$ and it describes  the cost of opting for  a certain strategy, while $P(\bld x)$ is the reward attained. The price is assumed linear $P(\bld x)=\bar P -DA\bld x$, where $\bar P\in\bR^3 $ and $D\in\bR^{3\times 3}$ is a diagonal matrix, their non zero components  are randomly chosen from $[250,500]$ and $[1,5]$ respectively. The function $c_i(\bld x) = x_i^\top Q_i x_i + q_i^\top x_i$ is quadratic, where $Q_i\in\bR^{4\times 4}$ is diagonal and $q_i\in\bR^4$. Their values are randomly chosen from $[1,8]$ and $[1,4]$, respectively.   

\begin{figure}[ht]
\centering
\subfloat[][]
{\includegraphics[width=.4\textwidth]{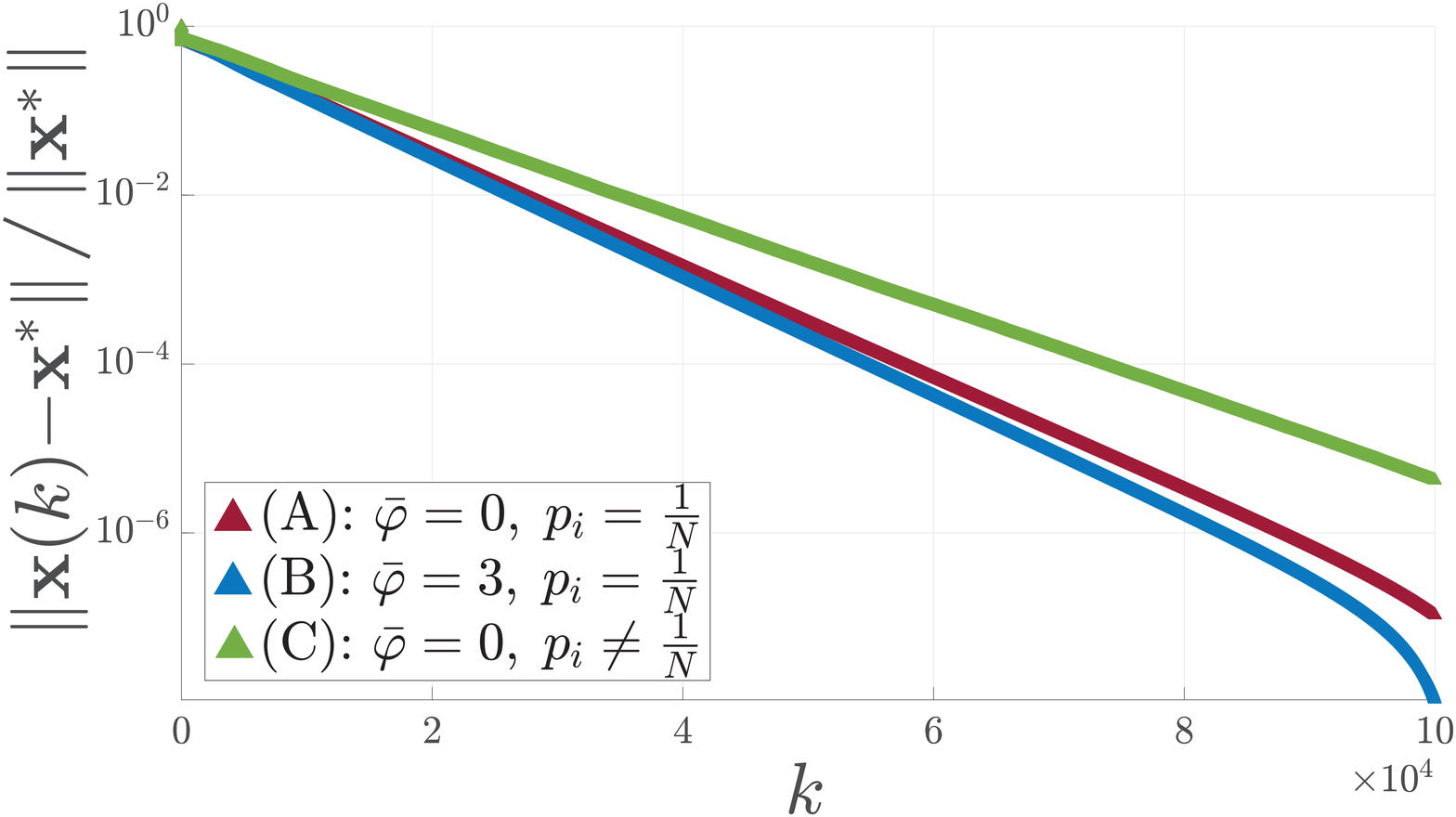}\label{fig:NormXVal}} \\
\subfloat[][]
{\includegraphics[width=.4\textwidth]{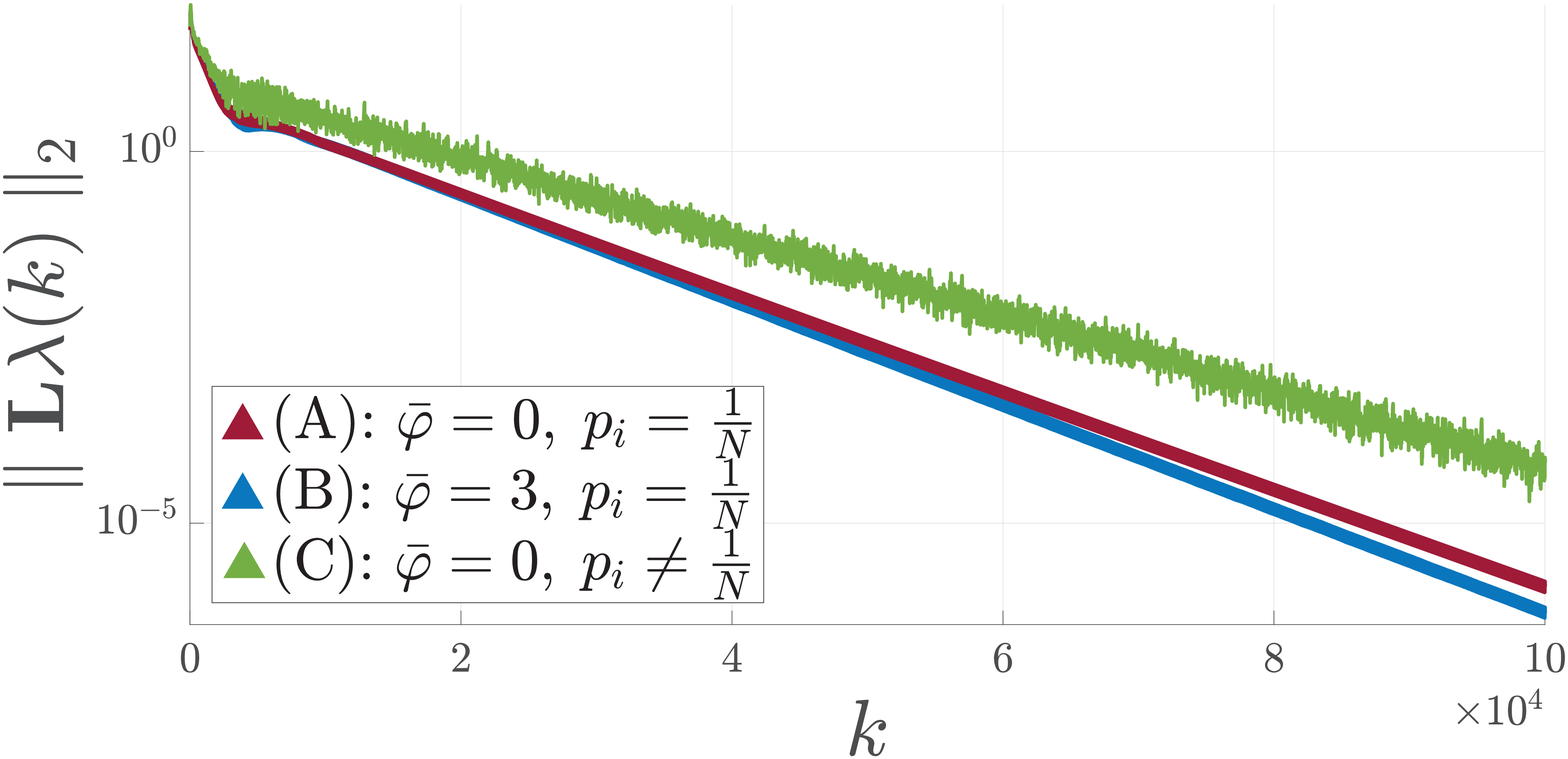}\label{fig:LambdaErr}} \\
\subfloat[][]
{\includegraphics[width=.4\textwidth]{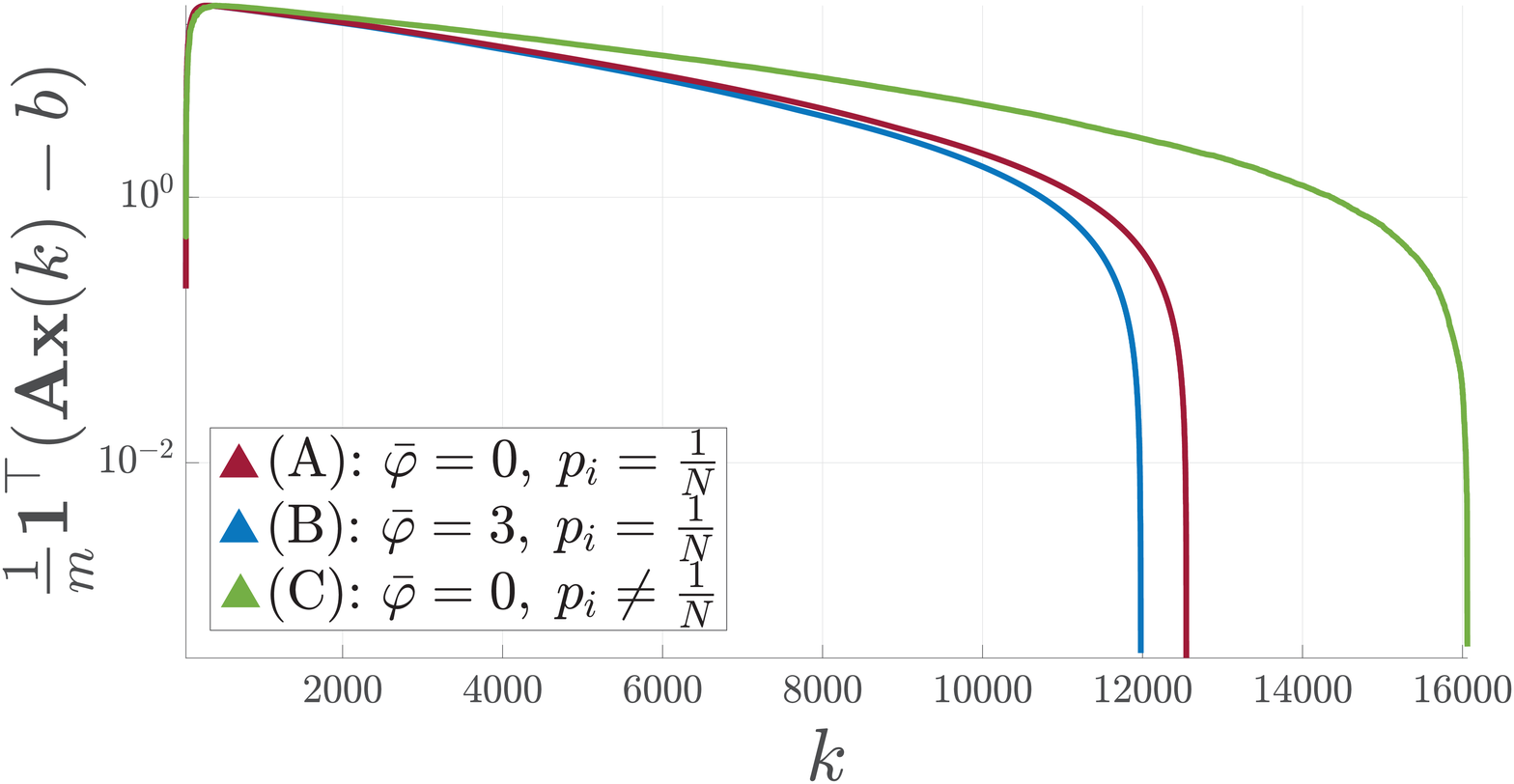}\label{fig:Constr}}\\
\caption{%Comparison of the game with different combinations of delays and activation sequence. The plots depict respectively: 
(a) Normalized distance from the v-GNE, (b) Norm of the disagreement between the dual variables, (c) Constraints violation.}
\label{fig:simulations}
\end{figure}

In order to explore different setups we simulate three different cases:
\begin{enumerate}[(A)]
\item The communication is  delay free ($\bar\varphi>0$) and the activation sequence is alphabetic, and hence $P[\zeta (k)=\bld H_i]=\frac{1}{N}$, for every $i\in\ca N$.
\item The activation sequence is still alphabetic, but the communication may be delayed of $3$ time instants at most, i.e., $\bar\varphi=3$.
\item The communication has no delay, but the probability of update is different   between agents, half of them have $p_i=\frac{1}{6}$, while the rest $p_i=\frac{1}{12}$.
\end{enumerate}

The outcome of these scenarios are presented in Figure~\ref{fig:simulations}. %More precisely, in Figure~\ref{fig:NormXVal} we show the convergence of the collective strategy $\bld x(k)$ to the v-GNE of the game, it is interesting to notice that the delayed information almost does not affect the convergence, while a more skew update makes it slower. 
%The same affect can be noticed in the evolution of the consensual disagreement (Figure~\ref{fig:LambdaErr}). Finally, we notice that  after $16000$ iterations the coupling constraints are satisfied in all three cases (Figure~\ref{fig:Constr}).  
The main difference can be noticed in the case of a non-uniform update probability; in fact we notice that a skewer probability  implies  slower converge. From the simulations performed, we have noticed that the presence of delay does not affect the convergence speed in a drastic way, and, in some fortunate cases, can even lead to a faster convergence. %This is due to the conservative upper bound provided by Theorem~\ref{th:convergence_AD-GENO} for the maximum delay, if it exceeds the theoretical limit the effect becomes relevant and can also lead to instability. 
From simulations, we noticed that the convergence of the dual variables is often  the bottleneck to high convergence performances. In all our algorithms, we mitigated this effect by an appropriate tuning of  $\rho$.  From \eqref{eq:step_choices}, one can deduce that a smaller value of $\rho$ allows to choose a bigger $\varepsilon_i$, thus a bigger step towards consensus. 

%In this simulation, the presence of the delay slightly increases the convergence but in general does not affect the algorithm performance in a major way.  

\subsection{Comparison between algorithms}
Next, we compare the performance of AD-GENO with respect to AD-GEED and ADAGNES, from a computational time point of view. %Due to the limitations of ADAGNES, we use a simplified version of AD-GENO for the comparison and considered only equality coupling  constraints. 
For the comparison with ADAGNES, we consider a modified version of the Nash--Cournot game presented in Section~\ref{sec:sim_AD_GENO_conv} with the coupling equality constraints $\bld A \bld x = b$, namely, the overall production in each market must match the correspondent capacity/demand. Here, we consider $N=40$ firms, each with at most $n_i=2$ products. To provide an extensive comparison, we considered many instances of this game varying the communication between agents, from a complete to a sparse graph. This is achieved by increasing the number of markets. More precisely, we considered the average degree of the nodes to create graphs with a desired sparsity, e.g., if the node average degree is $39$ then the graph is complete. The other quantities in the games are chosen as in the previous section. We compared the algorithms over $160$ different graphs. The computational time required to obtain convergence is compared in the three cases.\footnote{The computation is performed on a single computer, thus the considered time is due to the local updates only and not the communications between the agents.}

\begin{figure}
\centering
\includegraphics[trim= 0 0 0 0, clip, scale = 0.22]{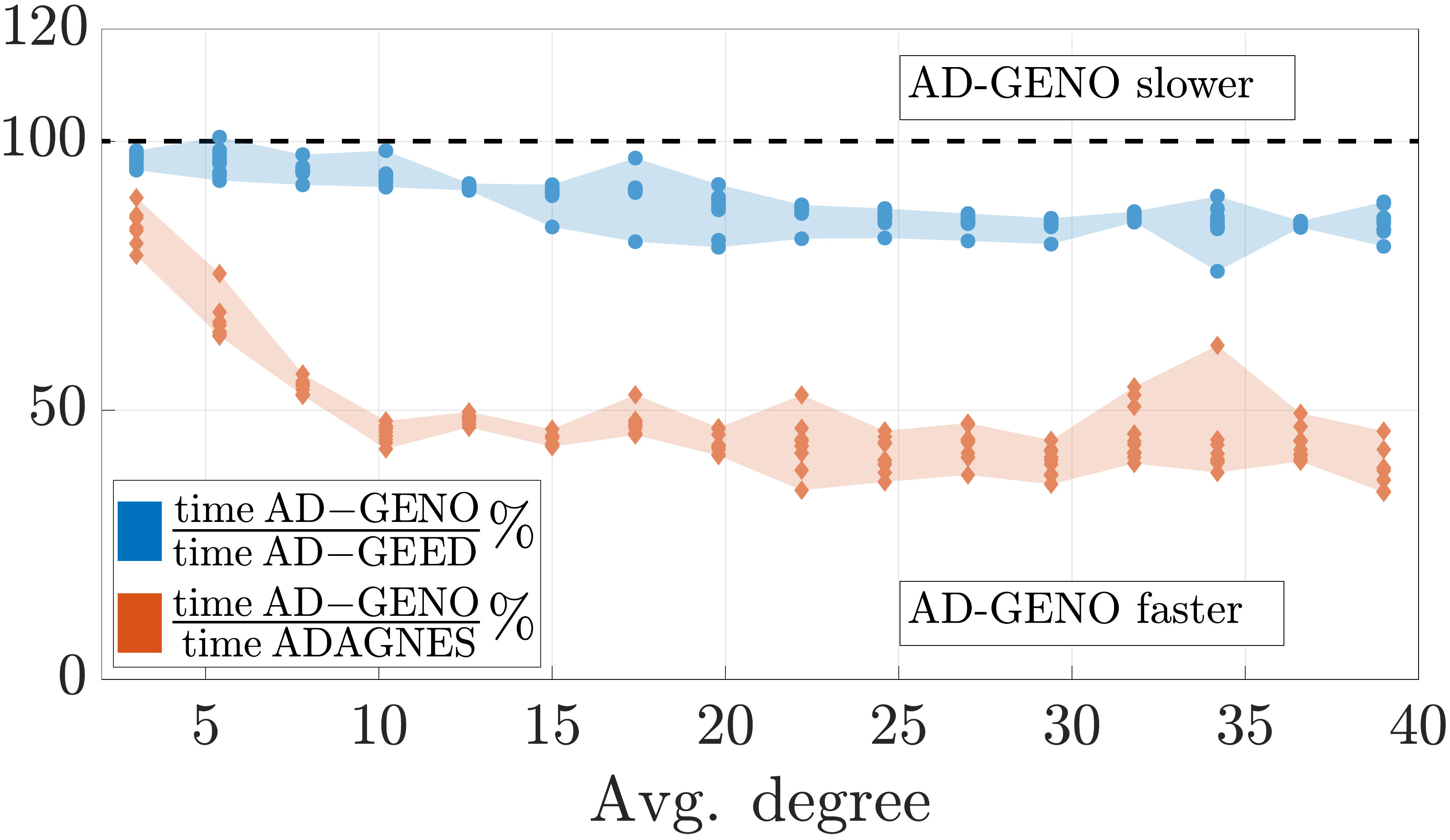}
\caption{Comparison of the computation time of ADAGNES vs AD-GENO (orange diamond) and  AD-GEED vs AD-GENO (blue dots), w.r.t. the variation of the communication network connectivity.% If a value is lower then $100$, then AD-GENO is faster than the compared algorithm. The smaller the average node degree the more sparse the graph. 
}
\label{fig:alg_cmp}
\end{figure}
The results of the simulations are presented in Figure~\ref{fig:alg_cmp}. As expected AD-GENO always outperforms AD-GEED, since it achieves the same dynamics of $(\bld x,\bld \lambda)$ with fewer auxiliary variables. As expected, the gap between the two algorithms shrinks for a sparse graph while it increases for a dense one, from $\sim 3\%$ to $\sim 20\%$. 
A similar behavior arises when AD-GENO is compared to ADAGNES, due to the increment of auxiliary variables for highly connected graphs. In particular, the advantage in using AD-GENO starts from $\sim 20\%$ when the graph has an average degree of $3$ and becomes  $\sim 60\%$ when the graph is complete.

%These results show that for this game it is always convenient to adopt AD-GENO instead of the other available algorithms.

\section{Conclusion}
\label{sec:conclusion}
Solving a GNE seeking problem in strongly monotone games is possible via AD-GENO in an asynchronous fashion, with node variables only, and by ensuring resilience to delayed information. In our numerical experience, AD-GENO outperforms the available solutions in the literature, both in terms of computational time and number of variables required. 

%This work introduces three fully distributed algorithms, to solve a GNE seeking problem with continuously differentiatble and convex cost functions, all derived from a variant of the forward-backward splitting algorithm. The first one (SD-GENO) is based on a synchronous update of the agents. On the other hand, AD-GEED and AD-GENO  provide an asynchronous solutions robust to communication delays. The derivation of AD-GENO can be seen as an offspring of AD-GEED, where the number auxiliary variables becomes equal to the number of agents, instead to the number of edges. This difference enhances the scalability properties of the algorithm. All the proposed algorithms globally converge to a GNE of the game.  

Unfortunately, the  ``ARock'' framework does not ensure robustness to lossy communication. This is currently an open problem that is left to future research. %, but it can be of high interest in order to improve the usefulness of this type of algorithms in real life scenarios. 
Another interesting topic is the generalization of the algorithm to the case of time-varying communication networks, as the independence from the edge variables makes this approach more suitable to address this problem.

\section*{Funding}
The work of Cenedese, and Cao was supported in part by the European Research Council (ERC-CoG-771687) and the Netherlands Organization for Scientific Research (NWO-vidi-14134). The work of Grammatico was partially supported by NWO under research project OMEGA (613.001.702) and P2P-TALES (647.003.003) and by the European Research Council under research project COSMOS (802348).
\appendix 

\section{Proofs of Section~\ref{sec:synch_case}}
\label{app:proof_synch_case}

\subsection{Proof of Lemma~\ref{lem:max_mon_of_operators}}
%It follows as in \cite[Proof of Th.~4.5]{pavel2017:distributed_primal-dual_alg}.
%\alert{Note:can be omitted and refer to Pavel}
%
The operator $\ca A$ is the sum of two operators $\ca A_1+\ca A_2$, the first is a real skew symmetric matrix, hence maximally monotone by \cite[Ex.~20.35]{bauschke2011convex}. $\ca A_2$ is $N_{\bld \Omega} \times 0 \times N_{\bR^{mN}_{\geq 0}} $, thus maximally monotone from \cite[Ex.~20.26]{bauschke2011convex}. 
By \cite[Cor.~25.5(i)]{bauschke2011convex}, we conclude that $\ca A$ is maximally monotone, since $\mathrm{dom}(\ca A_1) = \bR^{n+(E+N)m}$.

Notice that the cocoercity of an operator also implies its maximally monotoniticy, see \cite[Ex.~20.31]{bauschke2011convex}.
Next, we prove the cocoercivity of $\ca B$. For all $\varpi_1,\varpi_2\in \mathrm{dom}(\ca B)$ , it holds
\begin{equation*}
\begin{split}
&\langle \ca B(\varpi_1)-\ca B(\varpi_2) |x-y\rangle  = <\left[\begin{smallmatrix} F(\bld x_1) - F(\bld x_2) \\
0\\
\bld{L(\lambda_1-\lambda_2)} \end{smallmatrix}\right] | \varpi_1-\varpi_2>\\
&= <F(\bld x_1) - F(\bld x_2)| \bld x_1-\bld x_2>+ (\bld \lambda_1-\bld \lambda_2)^\top\bld{L}(\bld\lambda_1-\bld\lambda_2) \\
& \geq \tfrac{\alpha}{\ell^2} \lVert F(\bld x_1) - F(\bld x_2)  \rVert^2 + \lambda_{\max}(\bld L)^{-1} \lVert \bld{L}(\bld\lambda_1-\bld\lambda_2)  \rVert^2\\
&\geq \min\big\{\tfrac{\alpha}{\ell^2},\lambda_{\max}(\bld L)^{-1} \big\} \lVert \ca B(\varpi_1)-\ca B(\varpi_2) \rVert^2 \\
& = \chi \,\lVert \ca B(\varpi_1)-\ca B(\varpi_2) \rVert^2  \:,
\end{split}
\end{equation*}
where the first inequality is attained by the $\alpha$-strong monotonicity and $\ell$ Lipschitzianity of $F$ and by recalling that for every symmetric real matrix $M\succeq 0$ it holds $y^\top My\geq \lambda_{max}(M)^{-1} \lVert My\rVert^2$, with $\lambda_{\max}(M)\in\bR_{>0}$. \hfill\QED
\smallskip

\subsection{Proof of Theorem~\ref{th:eq_are_vGNE_mod_map}}
Consider the equilibrium point $\col (\bld x^*,\bld \lambda^*,\bld z^*)$ with $\bld z^*=\bld{V^\top \sigma}^*$. Then,~\eqref{eq:row2_sync_z} at the equilibrium reduces to $\bld 0 = -\rho\bld L\bld \lambda^*$, and thus $\bld \lambda^* = \lambda^*\otimes \bld 1$. 

Moreover, manipulating \eqref{eq:row3_sync_z} and evaluating it in the equilibrium lead to  
\begin{equation}
\label{eq:row2_proof_equilibrium}
\bld 0 \in N_{\bR^{mN}_{\geq 0}}(\bld \lambda^*)+\bar b-\Lambda \bld x^*+\rho\bld z^* + (2\rho \delta +1) \bld L \bld \lambda^*  \,.
\end{equation}
Recalling that $\bld L \bld \lambda^* = \bld 0$ and multiplying both sides of  \eqref{eq:row2_proof_equilibrium} by $(\bld 1^\top \otimes I)$ leads to
\begin{equation}
\label{eq:row2_proof_equilibrium_2}
\bld 0 \in  (\bld 1^\top \otimes I)(N_{\bR^{mN}_{\geq 0}}(\bld \lambda^*)+\bar b-\Lambda \bld x^* +\rho\bld z^* )\,.
\end{equation}   
Using the fact that $\sum_{i\in\ca N}N_{\bR^m_{\geq 0}} (\lambda^*)=N_{\cap_{i\in\ca N} \bR^m_{\geq 0}} (\lambda^*)=N_{\bR^m_{\geq 0}} (\lambda^*)$ and the assumption $\bld 1^\top \bld z^* =0$,  \eqref{eq:row2_proof_equilibrium_2} becomes
\begin{equation}
\label{eq:row2_proof_equilibrium_3}
\bld 0 \in  N_{\bR^{m}_{\geq 0}}(  \lambda^*)+\bar b - A \bld x^*\,.
\end{equation}

Finally, \eqref{eq:row3_sync} evaluated in the equilibrium is 
\begin{equation}
\label{eq:row3_proof_equilibrium_2}
\bld 0 \in F(\bld x^*)+N_{\bld \Omega}(\bld x^*)+\Lambda^\top \bld \lambda^*\\
\,,
\end{equation}   
or equivalently
\begin{equation}
\label{eq:row3_proof_equilibrium_3}
\bld 0 \in  \nabla f_i(x_i^*,\bld x^*_{-i})+N_{\Omega_i}( x^*_i)+A_i^\top  \lambda^*\:,\quad \forall i\in\ca N\,.
\end{equation}
Inclusions \eqref{eq:row3_proof_equilibrium_3} and \eqref{eq:row2_proof_equilibrium_3} are the KKT conditions in \eqref{eq:KKT_VI}, and hence from \cite[Th.~3.1]{facchinei:fischer:piccialli:07} we conclude that $\col (\bld x^*,\bld \lambda^*,\bld z^*)$ is a v-GNE of the game.
\hfill\QED
\smallskip

\subsection{Proof of Theorem~\ref{th:convergence_sync}}
From \cite[Th.~2]{feingold:varga:1962:gerschgorin_circle}, the choice of $\vartheta$, $ \varepsilon_i$, $\delta$ and $\tau_i$  in~\eqref{eq:step_choices} implies that $\Phi-\vartheta I\succ 0$. Then, invoking \cite[Lem.~5.6]{pavel2017:distributed_primal-dual_alg} we obtain that $\Phi^{-1}\ca B$ and $\Phi^{-1}\ca A$ are respectively $\chi\vartheta$-cocoercive and  maximally monotone in the $\Phi$ induced norm. Furthermore, it also shows that $(\Id-\Phi^{-1}\ca B)$ is $\frac{1}{2\chi\vartheta}$-AVG and $\mathrm{J}_{\Phi^{-1}\ca A}\coloneqq(\Id+\Phi^{-1}\ca A)$ is FNE. Applying \cite[Prop.~2.4]{combettes:yamada:15}, we conclude that $T$ is $\frac{2\chi\vartheta}{4\chi\vartheta-1}$-AVG. The Krasnosel'ski\u i iteration in \eqref{eq:Krasno_iter_synch} converges to $\varpi^*\in\fix(T)$ if $\eta\in\big(0,\frac{4\chi\vartheta-1}{2\chi\vartheta}\big)$, \cite[Th.~5.14]{bauschke2011convex}. 

The above argument establishes that $\lim_{k\rightarrow +\infty}\bld \sigma^k = \bld \sigma^*$, and hence $\lim_{k\rightarrow +\infty} \bld V^\top \bld \sigma^k = \bld V^\top \bld \sigma^*=:\bld z^*$. Therefore, $\bld z$ converges and consequently we conclude that Algorithm~\ref{alg:synch_alg} converges to $\col(\bld x^*,\bld \lambda^*,\bld z^*)$. The choice of the initial value $\bld z_0=\bld 0$, implies that $\bld 1^\top \bld z^k =0$, for all $k\in\bN$ since its values will be in the range of $L$.
 Finally, applying Theorem~\ref{th:eq_are_vGNE_mod_map} we  prove that the equilibrium is the v-GNE of the original game.
\hfill\QED
\smallskip

\section{Proof of Theorem~\ref{th:convergence_asynch_sigma}}
\label{app:prrof_AD_GEED}
From the proof of Theorem~\ref{th:convergence_sync}, we know that $T$ is $\frac{2\chi\vartheta}{4\chi\vartheta-1}$-AVG, and therefore it can be rewritten as $T=(1-\frac{2\chi\vartheta}{4\chi\vartheta-1})\Id+\frac{2\chi\vartheta}{4\chi\vartheta-1} P$, where $P$ is nonexpansive, \cite[Prop.~4.35]{bauschke2011convex}. By substituting it into \eqref{eq:Krasno_asynch_final}, we obtain
\begin{equation}\label{eq:Krasno_asynch_rule_proof}
\varpi^{k+1}=\varpi^k+ \frac{2\eta\chi\vartheta}{4\chi\vartheta-1} \zeta^k( P-\Id)\hat\varpi^k\,.
\end{equation}
For \eqref{eq:Krasno_asynch_rule_proof}, we apply \cite[Lem.~13 and Lem.~14]{peng2016arock}, to conclude that $(\varpi^k)_{k\in\bN}$ is bounded and that it converges almost surely to $\varpi^*\in\fix(P)=\fix(T)$, for $\frac{2\eta\chi\vartheta}{4\chi\vartheta-1}\in(0,\frac{cNp_{\min}}{2\bar\varphi\sqrt{p_{\min}}+1}]$, thus if $\eta\in \left( 0,   \frac{cNp_{\min}}{2\bar\varphi\sqrt{p_{\min}}+1} \left(2- \frac{1}{2\chi\vartheta}\right)\right]$. 
Since $\fix(T)=\zer(\ca A+\ca B)$, we conclude from Proposition~\ref{prop:zer_AB_are_vGNE} that $\{\bld x^k\}_{k\in\bN_{\geq 0}}$ converges to the v-GNE of \eqref{eq:game_formulation} almost surely.
\hfill\QED
\smallskip

\section{Proof of Theorem~\ref{th:convergence_AD-GENO}}
\label{app:AD_GENO}
The change of auxiliary variables in Algorithm~\ref{alg:AD-GENO} from $\bld \sigma$ to $\bld z$ leads to a different update rule for $\bld \lambda$, while the one for $\bld x$ remains unchanged. Therefore, if we show that the modified update of $\bld \lambda$ is equivalent to the one in Algorithm~\ref{alg:AD-GEED}, we can infer the convergence from Theorem~\ref{th:convergence_asynch_sigma}.

\smallskip
We prove by induction that, given an agent $i$, the update of $\lambda_i$ at time $k$ in Algorithm~\ref{alg:AD-GEED}~and~\ref{alg:AD-GENO} are equivalent. Note that the two update rules are equivalent if it holds that 
\begin{equation}\label{eq:condition_z_sigma}
\tilde z_{i}(k)=([V^\top]_{i}\otimes I_m)\hat{ \bld \sigma}(k)\:,
\end{equation} 
for every $k>0$ and $i\in\ca N$.

\underline{\textit{Base case:}} Iteration $k$ is the first in which agent $i$ updates its variables. 
If $k=0$, then in AD-GENO $\mu_i =\bld 0_{m}$ for every $i$ and $\hat{\bld \sigma}(0) = \bld 0_{mM}$ in AD-GEED, hence \eqref{eq:condition_z_sigma} is trivially verified.

If instead $k>0$, it holds that $z_{i}(k)=z_{i}(0)=\bld 0_m$, while $\hat{\bld \sigma}(k) \not = \bld 0_{mM}$, since the neighbors of $i$ can update more than once before the first update of $i$ (as shown in Example~\ref{ex:sigma_z}). We define for each $j\in\ca N_i$ the set $\ca S_{j}(k)$, a $t\in\bN$ where $t<k$ belongs to $\ca S_{j}(k)$ if, at the iteration $t$, the agent $j$ completes an update. %For simplicity, we also assume that its elements are in a decreasing order. 
The maximum time in $\ca S_{j}(k)$ is denoted as $m_j(k):=\max\{\ca S_{j}(k)\}$ and $\check {\ca S}_{j}(k):={\ca S}_{j}(k)\setminus m_j(k)$.
From this definitions, we obtain that   
\begin{equation}\label{eq:E_top_sigma_hat}
([V^\top]_{i}\otimes I_m)\hat{ \bld \sigma}(k) = \textstyle{ \sum_{l\in\ca E^{\mathrm{out}}_i} } \sigma_{l}(0) - \textstyle{ \sum_{d\in\ca E^{\mathrm{in}}_i} } \sigma_{d}(m_j(k))\:,
\end{equation}      
where $j$ is the element of $e_d$ different from $i$. Furthermore, from the update rule of $\sigma_d$ in Algorithm~\ref{alg:AD-GEED}, we derive
\begin{equation}\label{eq:definition_sigma_d}
\begin{split}
 \sigma_{d}(m_j(k)) &= \sigma_{d}\left(\max\{\check{\ca S}_{j}(k)\}\right)\\
 & \hspace{0.5cm} + \eta\delta\rho \bigg(\lambda_{j}\left(\max\{\check{\ca S}_{j}(k)\}\right)-\lambda_{i}(0)\bigg)\\
 & = \eta\delta\rho\: \textstyle{ \sum_{h\in\check{\ca S}_{j}(k) } }\left(   \lambda_{j}(h) - \lambda_{i}(0) \right) 
\end{split}
\end{equation}
Substituting  \eqref{eq:definition_sigma_d} into \eqref{eq:E_top_sigma_hat} leads to
\begin{equation}\label{eq:E_top_sigma_hat2}
\begin{split}
([V^\top]_{i}\otimes I_m)\hat{ \bld \sigma}(k) &= \rho\eta \delta\left( \textstyle{\sum_{j\in\ca N_i\setminus \{i \}} }\left|\check{S}_{j}(k)\right| \lambda_{i}(0) \right. \\ & \left.\quad  -  \textstyle{\sum_{j\in\ca N_i\setminus \{i \}} \sum_{h\in\check{\ca S}_{j}(k)} }\lambda_{j}(h) \right)\,.\\
\end{split}
\end{equation}
From the definition given in Algorithm~\ref{alg:AD-GENO} of $\mu_i$, we attain that $([V^\top]_{i}\otimes I_m)\hat{ \bld \sigma}(k)=\eta \delta\mu_i = \tilde z_i(k) $, therefore \eqref{eq:condition_z_sigma} hold.

\smallskip
\underline{\textit{Induction step:}} Suppose that \eqref{eq:condition_z_sigma} holds for some $\bar k >0 $ that corresponds to the latest iteration in which agent $i$ performed the update, i.e. $z_{i}(\bar k)\not = \bld 0$. 

Consider the next iteration $k$ in which agent $i$ updates, $k>\bar k$. Here, $\ca S_{j}(k)$ is defined as above, but for time indexes $(\bar k,k]$. Following similar reasoning in the previous case, we obtain
\begin{equation}\label{eq:E_top_sigma_hat_induction}
\begin{split}
([V^\top]_{i}&\otimes I_m)\hat{ \bld \sigma}(k) =([V^\top]_{i}\otimes I_m)\hat{ \bld \sigma}(\bar k) \\
&\quad +\eta\delta\rho\textstyle{\sum_{l\in\ca E_i^{\mathrm{out}}}([V]_{l}\otimes I_m)\hat{ \bld \lambda}(\bar k) } \\
&\quad+\eta \delta\rho \left(  \textstyle{ \sum_{j\in\ca N_i\setminus \{i \}} }\left|\check{S}_{j}(k)\right| \lambda_{i}(0) \right.\\
&\quad\left.-  \textstyle{\sum_{j\in\ca N_i\setminus \{i \}} \sum_{h\in\check{\ca S}_{j}(k)} }\lambda_{j}(h) \right)\\
\end{split}
\end{equation}
where we used the fact that $l\in\ca E_i^{\mathrm{out}}$ is updated at the same time of $i$. Furthermore, from the induction assumption,
\begin{equation}
\begin{split}
([V^\top]_{i}\otimes I_m)\hat{ \bld \sigma}(k) &=   z_i (\bar k)+\eta \delta\rho \left( \textstyle{\sum_{j\in\ca N_i\setminus \{i \}} } \left|\check{S}_{j}(k)\right| \lambda_{i}(0) \right. \\
&\qquad  \left. -  \textstyle{\sum_{j\in\ca N_i\setminus \{i \}} \sum_{h\in\check{\ca S}_{j}(k)} } \lambda_{j}(h) \right)\\
& = z_i(\bar k) +\eta \delta \rho\mu_i= \tilde z_i(k)\:,
\end{split}
\end{equation}
where the last step holds because in the reading phase of Algorithm~\ref{alg:AD-GENO}, we reset to zero the values of $\mu_i$, every time that $i$ starts an update. 
Therefore, \eqref{eq:condition_z_sigma} holds for $k$. 

This concludes the proof by induction, showing that the update of $\lambda_{i}$ defined in Algorithm~\ref{alg:AD-GENO} is equivalent to the one in Algorithm~\ref{alg:AD-GEED}, for any $k$. Finally, since the pair $(\bld x,\bld \lambda)$ has an update rule equivalent to the one in AD-GEED, the convergence of $(\bld x(k))_{k\in\bN}$ to the v-GNE of the game \eqref{eq:game_formulation} follows from Theorem~\ref{th:convergence_asynch_sigma}.     
\hfill\QED 

%% The Appendices part is started with the command \appendix;
%% appendix sections are then done as normal sections
%% \appendix

%% \section{}
%% \label{}

%% If you have bibdatabase file and want bibtex to generate the
%% bibitems, please use
%%
\bibliographystyle{elsarticle-num} 
%%  \bibliography{<your bibdatabase>}

%% else use the following coding to input the bibitems directly in the
%% TeX file.
\balance
\bibliography{libraryCC,librarySG}

\end{document}
\endinput
%%
%% End of file `elsarticle-template-num.tex'.